\documentclass[11pt]{amsart}
\usepackage{geometry}
\geometry{a4paper}
\usepackage{graphicx}
\usepackage{amssymb, amsfonts, amsthm, dsfont,bm}
\usepackage{verbatim}

%\usepackage{refcheck}

%Newtheorems
\newtheorem{thm}{Theorem}
\newtheorem{lemma}{Lemma}
\newtheorem{prop}{Proposition}

\newtheorem{coro}{Corollary}
\newtheorem{rem}{Remark}
\newtheorem{defi}{Definition}
\newtheorem{ex}{Example}
\newtheorem{hypo}{Hypothesis}

%Definitions
\def\rr{\mathbb{R}}
\def\nn{\mathbb{N}}
\def\zz{\mathbb{Z}}
\def\ee{\mathbb{E}}
\def\pp{\mathbb{P}}

\def\Leb{\mathrm{Leb}}
\def\mod{\mathrm{mod}}
\def\Lip{\mathrm{Lip}}
\def\dd{\mathrm{d}}

\def\mup{\mu_{n}\hspace{-1pt}\otimes\hspace{-1pt}P^{n}}
\def\Kxi{\widetilde{K}_{\overline{\xi}}}

\def\Cemp{\widehat{\mathrm{Cov}}_{n}}
\def\Cpert{\widetilde{\mathrm{Cov}}_{n}}

\def\E{\mathcal{E}_{n}}
\def\Epert{\mathcal{\widetilde{E}}_{n}}

\def\h{\widehat{h}_{n}}
\def\hpert{\widetilde{h}_{n}}

\def\C{\mathrm{Corr}}
\def\K{\widetilde{K}_{n,r}}

%Front Matter
\title{Fluctuation bounds for chaos plus noise in dynamical systems}
\author{Cesar Maldonado}
\address{CPhT, CNRS-\'{E}cole Polytechnique, 91128 Palaiseau Cedex, France}
\address{Email address: \textit{\texttt{maldonado@cpht.polytechnique.fr}}}
\thanks{The author acknowledges J.-R. Chazottes, P. Collet and the anonymous reviewers for the careful reading of the manuscript, for their suggestions and corrections. The author thanks S. Galatolo and the DMA, Pisa, Italy, for their warm hospitality and where part of this work was done. The author is infinitely indebted to Adriana Aguilar Hervert}
\date{} 

\begin{document}
\begin{abstract}
We are interested in time series of the form $y_{n} = x_{n} + \xi_{n}$ where $\{ x_{n}\}$ is generated by a chaotic dynamical system and where $\xi_{n}$ models observational noise.
Using concentration inequalities, we derive fluctuation bounds for the auto-covariance function, the empirical measure, the kernel density estimator and the correlation dimension evaluated along $y_{0}, \ldots, y_{n}$, for all $n$.
The chaotic systems we consider include for instance the H\'{e}non attractor for Benedicks-Carleson parameters.
\end{abstract}
\maketitle
%\tableofcontents

% Introduction
\section{Introduction}

Practically all experimental data is corrupted by noise, whence the importance of modeling dynamical systems perturbed by some kind of noise. In the literature one finds two principal models of noise. On one hand, the \textsl{dynamical noise model} in which the noise term evolves within the dynamics (see for instance \cite{Arn} and references therein). And on the other hand, the so-called \textsl{observational noise model}, in which the perturbation is supposed to be generated by the observation process (measurement). In this paper we focus on the latter model of noise.

Suppose that we are given a finite `sample' $y_{0},\ldots,y_{n-1}$ of a discrete ergodic dynamical system perturbed by observational noise. Consider a general ob\-ser\-va\-ble  $K(y_{0},\ldots,y_{n-1})$. We are interested in estimating the fluctuations of $K$ and its convergence properties as $n$ grows. Our main tool is concentration inequalities. Roughly speaking, concentration inequalities allow to systematically quantify the probability of deviation of an observable from its expect value, requiring that the observable is smooth enough. The systems for which concentration inequalities are available must have some degree of hyperbolicity. Indeed, in \cite{ChGo}, the authors prove that the class of non-uniformly hyperbolic maps mo\-de\-led by Young towers satisfy concentration inequalities. They  are either exponential or polynomial  depending on the tail of the corresponding return-times. Concentration inequalities is a recent topic in the study of fluctuations of observables in dynamical systems. The reader can consult \cite{Ch} for a panorama.
%An important feature of the concentration inequalities is that they are valid for every $n$. This fact represents an advantage comparing with large deviation results which are only asymptotic. We remit the interested reader to \cite{Ch} for a panorama on the study of fluctuations of observables in dynamical systems.

The article is organized as follows. In section 2, we give some general definitions concerning observational noise and concentration inequalities. We give some typical examples of systems perturbed by observational noise. In section 3, we  prove our main theorem, namely, concentration inequalities for observationally perturbed systems (observed systems). As a consequence, we obtain estimates on the deviation of any separately Lipschitz observable $K(y_{0},\ldots,y_{n-1})$ from its expected value. Section 4 is devoted to some applications. We derive a bound for the deviation of the estimator of the auto-covariance function in the observed system. We provide an estimate of the convergence in probability of the observed empirical measure. We study the $L^{1}$ convergence of the kernel density estimator for a observed system. We also give a result on the variance of an estimator of the correlation dimension in the observed system. The observables we consider here were studied in \cite{ChCS} and \cite{ChGo} for dynamical systems without observational noise.

%Generalities & Concentration Ineq. for Noisy Systems
\section{Generalities}

%DS & Stochastic Processes
\subsection{Dynamical systems as stochastic process}

We consider a dynamical system $(X,T,\mu)$ where $(X,d)$ is a compact metric space and $\mu$ is a $T$-invariant probability measure. In practice, $X$ is a compact subset of $\rr^{n}$. 

One may interpret the orbits $(x, Tx, \ldots)$ as realizations of the stationary stochastic process defined by $X_{n}(x) = T^{n}x$. The finite-dimensional marginals of this process are the measures $\mu_{n}$ given by
\begin{equation}\label{marginal}
d\mu_{n}(x_{0},\ldots,x_{n-1}) = d\mu(x_{0})\prod_{i=1}^{n-1}\delta_{x_{i}=Tx_{i-1}}.
\end{equation}
Therefore, the stochasticity comes only from the initial condition. When the system is sufficiently mixing, one may expect that the iterate $T^{k}x$ is more or less independent of $x$ if $k$ is large enough.

%Observational noise
\subsection{Observational noise}
The noise process is modeled as bounded random variables $\xi_{n}$ defined on a probability space $(\Omega, \mathcal{B},P)$ and assuming values in $X$. Without loss of generality, we can assume that the random variables $\xi_{n}$ are centered, i.e. have expectation equal to 0.

\noindent
In most cases, the noise is small and it is convenient to represent it by the random variables $\varepsilon\xi_{i}$ where $\varepsilon>0$ is the amplitude of the noise and $\xi_{i}$ is of order one.\newline 
%For instance, one can impose that the variance of the $\xi_{n}$ is equal to one.
%Let us consider a probability space $(\Omega,\mathcal{B}, P)$ and let $(\xi_{n})$ be a $X$-valued and identically distributed stochastic process with distribution $P$. We denote by $e_{n}$ the realization of the random variable $\xi_{n}$. For the sake of simplicity, we consider these random variables $\xi_{n}$ having mean equal to zero.

We introduce the following definition.

\begin{defi}[Observed system]%\label{observed}
For every $i\in\nn\cup\{0\}$ (or $i\in\zz$ if the map $T$ is invertible), we say that the sequence of points $\{y_{i}\}$ given by
\begin{equation*}%\label{observ-noise}
y_{i} := T^{i}x + \varepsilon\xi_{i},
\end{equation*}
is a trajectory of the dynamical system $(X,T,\mu)$ perturbed by the observational noise $(\xi_{n})$ with amplitude $\varepsilon>0$. Hereafter we refer to it simply as the \textsl{observed system}.
\end{defi}

Next, we make the following assumptions on the noise.

\textbf{Standing assumption on noise:}
\begin{enumerate}
\item $(\xi_{n})$ is independent of $X_{0}$ and $\lVert\xi_{n}\rVert\leq1$;
\item The random variables $\xi_{i}$ are independent.
\end{enumerate}

%From incise (1), $\varepsilon e_{n}$ denotes the realization of the random variable $\varepsilon\xi_{n}$.

\begin{rem}
As we shall see, the $\xi_{i}$ need not be independent, although it is a natural assumption in practice.
\end{rem}

We notice that, under the same assumption on the noise, the authors of \cite{LN} give a consistent algorithm for recovering the unperturbed  time series from the sequence $\{y_{i}\}$. They assume that the process $(X_{n})$ is generated by a sufficiently chaotic dynamical system. The merit of Lalley and Nobel (\cite{LN}) is that a few assumptions are made (compare with Kantz-Schreiber's or Abarbanel's books \cite{KS,Aba}). In contrast, for the case of unbounded noise (e.g. Gaussian) and if the system present strongly homoclinic pairs of points, then with positive probability it is impossible to recover the initial condition of the true trajectory even observing an infinite sequence with noise (see also \cite{LN}).

% Examples
\subsection{Examples}

\begin{ex}\label{ex1}
Consider Smale's solenoid map, $T_{S}:\rr^{3}\to\rr^{3}$ which maps the torus into itself:
\[
T_{S}\left(\phi,u,v\right) = \left( 2\phi \ \ \mod\ 2\pi, \beta u+\alpha\cos(\phi),\beta v+\alpha\sin(\phi) \right),
\]
where $0<\beta<1/2$ and $\beta<\alpha<1/2$. Let the random variables $\xi_{i}$ be uniformly distributed on the solid sphere of radius one. For every vector $x = (\phi,u,v)$ in the torus, the observed system is given by $y_{i} = T_{S}(x_{i}) + \varepsilon \xi_{i}$, for some  fixed $\varepsilon>0$.
\end{ex}

\begin{ex}
Take $\mathbb{S}^{1}$ (the unit circle) as state space. Let us fix an increasing sequence $a_{0}<a_{1}<\cdots<a_{k}=a_{0}$, and consider for each interval $(a_{j},a_{j+1})$ ($0\leq j\leq k-1$) a monotone map $T_{j}:(a_{j},a_{j+1}) \to \mathbb{S}^{1}$. The map $T$ on $\mathbb{S}^{1}$ is given by $T(x)=T_{j}(x)$ if $x\in(a_{j},a_{j+1})$. It is well known that when the map $T$ is uniformly expanding, it admits an absolutely continuous invariant measure $\mu$. It is unique under some mixing assumptions. Let $P$ be the uniform distribution on $\mathbb{S}^{1}$. The observed sequence is $y_{i} = T^{i}(x) +\varepsilon \xi_{i}$.
\end{ex}

\begin{figure}[!htb]
\centering
\includegraphics[width=0.6\textwidth]{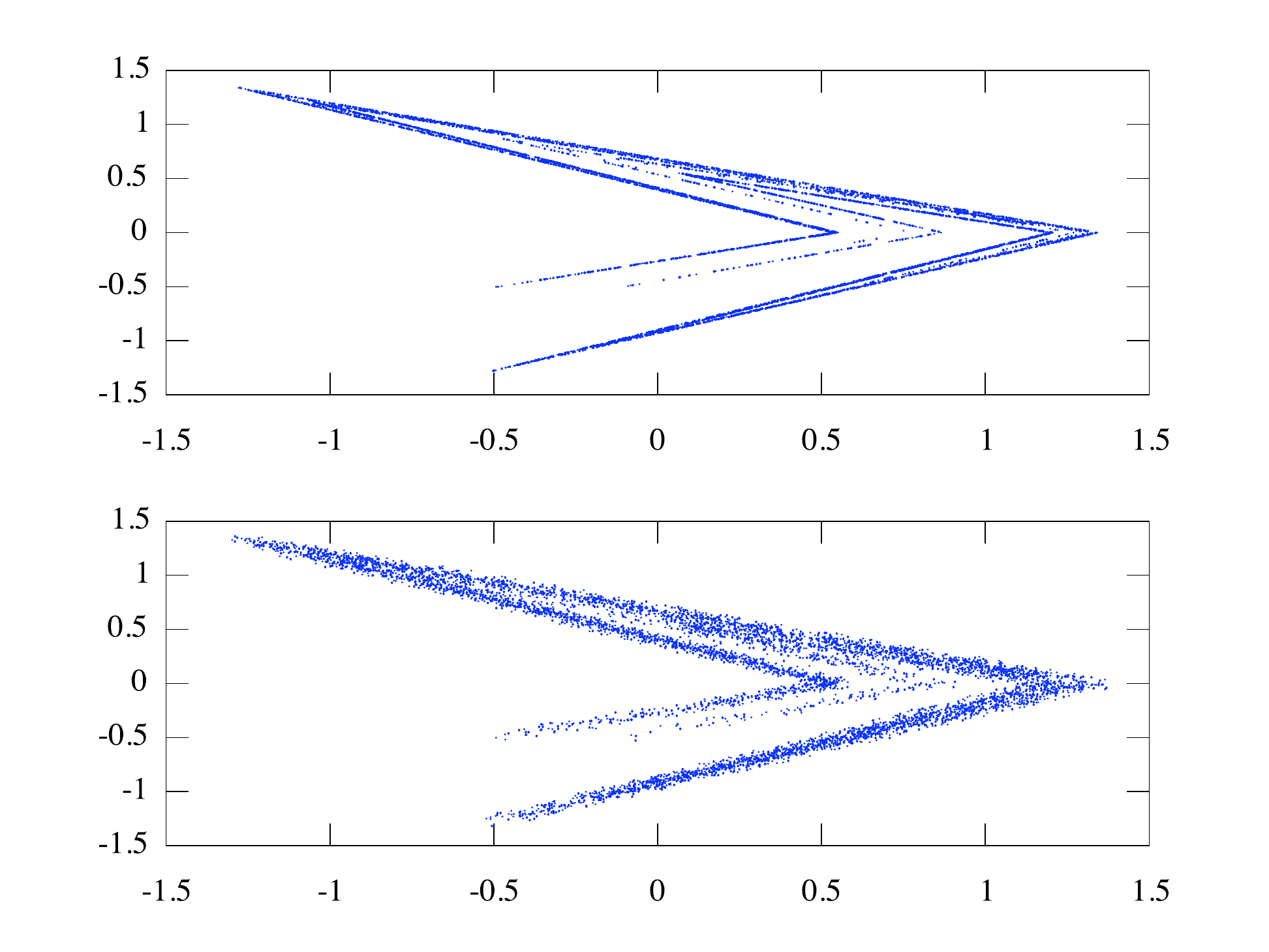}
\caption{Top: Simulation of the Lozi map for the parameters a=1.7 and b=0.5.
 Bottom: A simulation of the observed Lozi map with observational noise whose magnitude is bounded by $\varepsilon = 0.06$. }
\end{figure}
\begin{ex}
The Lozi map $T_{L}:\rr^{2}\to\rr^{2}$ is given by
\[
T_{L}(u,v) = \left( 1-a\lvert u\rvert + v ,bu \right), \hspace{1cm} (u,v)\in\rr^{2}.
\]
For $a=1.7$ and $b=0.5$ one observes numerically a strange attractor. In \cite{CL} the authors constructed a SRB measure $\mu$ for this map. It is also included in Young's framework \cite{Y1}. Now, as state space of the random variables we take $B_{1}(0)$, the ball centered at zero with radius one. Consider the uniform probability distribution on $B_{1}(0)$. Let us denote by $x$ the vector $(u,v)$ and let $\varepsilon>0$, so, the observed system is given by  $y_{i} = T_{L}^{i}x +\varepsilon \xi_{i}$. 
\end{ex}

\begin{ex}\label{ex4}
Consider the H\'enon map $T_{H}:\rr^{2}\to\rr^{2}$ defined as
\[
T_{H}(u,v) = \left( 1-au^{2} + v , bu \right), \hspace{1cm} (u,v)\in\rr^{2}.
\]
Where $0<a<2$ and $b>0$ are some real parameters. The state space of the random variables is again $B_{1}(0)$ with the uniform distribution on it. Let be $x=(u,v)$, then the observed system is given by $y_{i} = T_{H}^{i}x +\varepsilon \xi_{i}$. It is known that there exists a set of parameters $(a,b)$ of positive Lebesgue measure  for which the map $T_{H}$ has a topologically transitive attractor $\Lambda$, furthermore there exists a set $\Delta\subset\rr^{2}$ with $\Leb(\Delta)>0$ such that for all $(a,b)\in\Delta$ the map $T_{H}$ admits a unique SRB measure supported on $\Lambda$ (\cite{BY}). 
\end{ex}
\begin{figure}[!htb]
\centering
\includegraphics[width=0.7\textwidth]{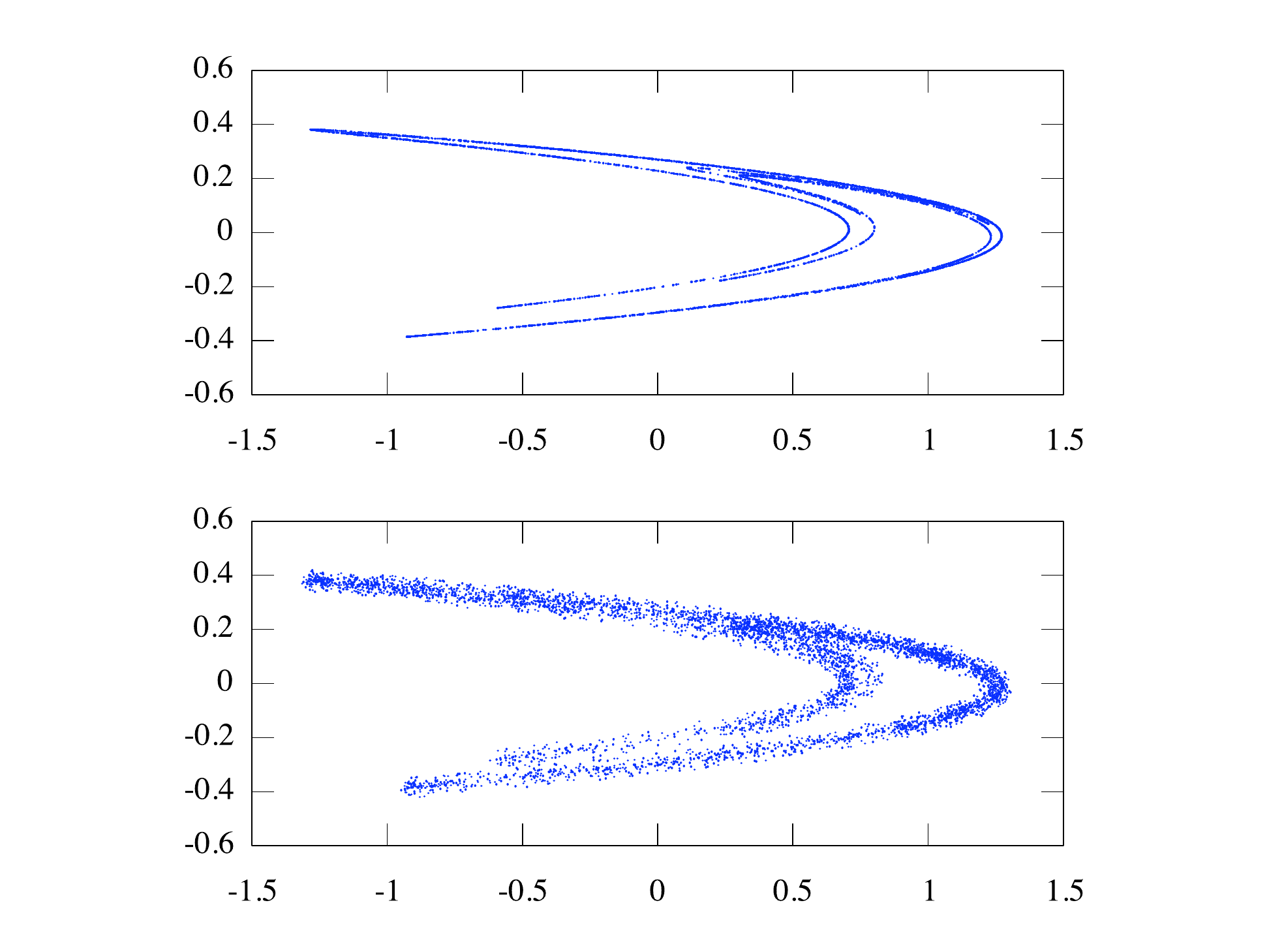}
\caption{Top: A simulation of the H\'{e}non map for the classical pa\-ra\-me\-ters a=1.4 and b=0.3.
Bottom: Simulation of the observed H\'{e}non map with observational noise whose magnitude is uniformly bounded by $\varepsilon=0.04$.}
\end{figure}

\begin{ex}\label{ex5}
The Manneville-Pomeau map is an example of an expansive map, except for a point where the slope is equal to 1 (neutral fixed point). Consider $X=[0,1]$, and for the sake of definiteness take
\[
T_{\alpha}(x) =
\begin{cases}
x+2^{\alpha}x^{1+\alpha} & \mbox{ if }\ x\in[0,1/2)\\
2x-1 	& \mbox{ if }\ x\in[1/2,1),
\end{cases}
\]
where $\alpha\in(0,1)$ is a parameter. It is well known that there exists an absolutely continuous invariant probability measure $d\mu(x)=h(x)dx$ and $h(x)\sim x^{-\alpha}$ when $x\to0$. The observed sequence is defined by $y_{i}= T^{i}_{\alpha}(x) +\varepsilon \xi_{i}$. The random variables $\xi_{i}$ are uniformly distributed in $X$. One identifies the $[0,1]$ with the unit circle to avoid leaks.
\end{ex}
\vspace{0.1cm}

% Concentration Inequalities
\subsection{Concentration inequalities} 
Let $X$ be a metric space. For any function of $n$ variables $K: X^{n}\to\rr$, and for each $j$, $0\leq j \leq n-1$, let
\begin{equation*}
\Lip_{j}(K) := \sup_{x_{0},\ldots,x_{n-1}}\sup_{x_{j}\neq x'_{j}}\frac{\lvert K(x_{0},\ldots,x_{j},\ldots,x_{n-1})-K(x_{0},\ldots,x'_{j},\ldots,x_{n-1}) \rvert}{ d(x_{j},x'_{j})}.
\end{equation*}
We say that $K$ is \textsl{separately Lipschitz} if, for all $0\leq j \leq n-1$, $\Lip_{j}(K)$ is finite.\newline

Now, we may state the following definition.

\begin{defi}
The stochastic process $(Y_{n})$ taking values on $X$ satisfies an \textsl{exponential concentration inequality} if there exists a constant $C>0$ such that, for any separately Lipschitz function  $K$ of $n$ variables, one has
\begin{equation}\label{exp-ineq}
\ee\left( e^{K(Y_{0},\ldots,Y_{n-1}) - \ee(K(Y_{0},\ldots,Y_{n-1}))}\right)\leq e^{C\sum_{j=0}^{n-1}\Lip_{j}(K)^{2}}.
\end{equation}
\end{defi}
Notice that the constant $C$ depends only on $T$, but neither on $K$ nor on $n$.\newline

A weaker inequality is given by the following definition.

\begin{defi}
The stochastic process $(Y_{n})$ taking values on $X$ satisfies a \textsl{polynomial concentration inequality} with moment $q\geq2$ if there exists a constant $C_{q}>0$ such that, for any separately Lipschitz function $K$ of $n$ variables, one has
\begin{equation}\label{poly-ineq}
\ee\left( \lvert K(Y_{0},\ldots,Y_{n-1}) - \ee(K(Y_{0},\ldots,Y_{n-1}))\rvert^{q} \right)\leq C_{q}\left(\sum_{j=0}^{n-1}\Lip_{j}(K)^{2}\right)^{q/2}.
\end{equation}
\end{defi}
As in the previous definition the constant $C_{q}$ does not depend neither on $K$ nor on $n$.

\begin{rem}
When $q=2$, we have a bound for the variance of $K(Y_{0},\ldots,Y_{n})$.
\end{rem}

\begin{rem}\label{remark-iid}
If $(Y_{n})$ is a bounded i.i.d. process then it satisfies \eqref{exp-ineq} (see e.g. \cite{Led}). It also satisfies \eqref{poly-ineq} for all $q\geq2$, see e.g. \cite{BBL} for more details.
\end{rem}

These concentration inequalities allow us to obtain estimates on the deviation pro\-ba\-bi\-li\-ties of the observable $K$ from its expected value.

\begin{prop}
If the process $(Y_{n})$ satisfies the exponential concentration inequality \eqref{exp-ineq} then for all $t>0$ and for all $n\geq1$,
\begin{equation}\label{exp-prob-deviation}
\pp\left\{\lvert K(Y_{0},\ldots,Y_{n-1}) - \ee(K(Y_{0},\ldots,Y_{n-1})) \rvert>t\right\} \leq 2 e^{\frac{-t^{2}}{4C\sum_{j=0}^{n-1}\Lip_{j}(K)^{2} }}.
\end{equation}
If the process satisfies the polynomial concentration inequality \eqref{poly-ineq} for some $q\geq2$, then we have that for all $t>0$ and for all $n\geq1$,
\begin{equation}\label{poly-prob-deviation}
\pp\left\{\lvert K(Y_{0},\ldots,Y_{n-1})-\ee(K(Y_{0},\ldots,Y_{n-1})) \rvert >t\right\} \leq  \frac{C_{q}}{t^{q}}\Big(\sum_{j=0}^{n-1}\Lip_{j}(K)^{2}\Big)^{q/2}.
\end{equation}
\end{prop}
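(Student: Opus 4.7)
The plan is to derive both bounds from the respective concentration inequalities via standard Markov-type arguments, with an exponential tilt in the first case and a direct moment bound in the second.

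For the polynomial bound \eqref{poly-prob-deviation}, I would simply apply Markov's inequality to the nonnegative random variable $|K(Y_0,\ldots,Y_{n-1})-\ee K(Y_0,\ldots,Y_{n-1})|^{q}$. This gives
\[
\pp\bigl\{|K-\ee K|>t\bigr\} \;\leq\; \frac{\ee\bigl(|K-\ee K|^{q}\bigr)}{t^{q}},
\]
and substituting \eqref{poly-ineq} on the right-hand side yields exactly the claimed bound. No optimization is needed here.

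For the exponential bound \eqref{exp-prob-deviation}, I would use an exponential Markov (Chernoff) argument with a free parameter $\lambda>0$. The key observation is that the map $\lambda K$ is separately Lipschitz with $\Lip_{j}(\lambda K)=\lambda\,\Lip_{j}(K)$, so applying \eqref{exp-ineq} to $\lambda K$ in place of $K$ gives
\[
\ee\bigl(e^{\lambda(K-\ee K)}\bigr) \;\leq\; e^{C\lambda^{2}\sum_{j=0}^{n-1}\Lip_{j}(K)^{2}}.
\]
Combining this with Markov's inequality applied to $e^{\lambda(K-\ee K)}$ yields $\pp(K-\ee K>t)\leq \exp\bigl(-\lambda t + C\lambda^{2}S\bigr)$, where I write $S=\sum_{j}\Lip_{j}(K)^{2}$ for brevity. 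Optimizing over $\lambda>0$ (the minimum is attained at $\lambda^{*}=t/(2CS)$) gives the one-sided bound $e^{-t^{2}/(4CS)}$. Repeating the argument with $-K$ in place of $K$, noting that $-K$ has the same separate Lipschitz constants, and taking a union bound produces the factor of $2$ in front.

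Neither step is genuinely difficult; the only subtlety worth flagging is the rescaling argument for the exponential case, which requires the concentration inequality to hold for \emph{every} separately Lipschitz observable with the \emph{same} constant $C$ (so that $\lambda K$ is an admissible test function). Since this is exactly what is built into Definition \eqref{exp-ineq}, the Chernoff optimization goes through cleanly, and no further hypothesis on $K$ or on the process is needed.
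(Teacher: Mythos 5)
Your proof is correct and follows exactly the route the paper indicates: a Chernoff/exponential-Markov bound applied to $\lambda K$ with optimization at $\lambda^{*}=t/(2CS)$ plus a union bound over $\pm K$ for \eqref{exp-prob-deviation}, and a direct Markov inequality on $\lvert K-\ee K\rvert^{q}$ for \eqref{poly-prob-deviation}. Your remark that the rescaling step relies on the concentration inequality holding for every separately Lipschitz observable with the same constant $C$ is exactly the right point to flag, and nothing further is needed.
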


The inequality \eqref{exp-prob-deviation} follows from the basic inequality $\pp(Z>t) \leq e^{-\lambda t}\ee(e^{\lambda Z})$ with $\lambda >0$ applied to $Z = K(Y_{0},\ldots,Y_{n-1})-\ee(K(Y_{0},\ldots,Y_{n-1}))$, using the exponential concentration ine\-qua\-li\-ty \eqref{exp-ineq} and optimizing over $\lambda$. The inequality \eqref{poly-prob-deviation} follows easily from \eqref{poly-ineq} and the Markov inequality (see \cite{Ch} for details).

It has been proven that a dynamical system modeled by a Young tower with exponential tails satisfies the exponential concentration inequality  \cite{ChGo}. The systems in the examples from \ref{ex1} to \ref{ex4} are included in that framework. The example \ref{ex5} satisfies the polynomial concentration inequality with moment $q<\frac{2}{\alpha}-2$ for $\alpha\in(0,1/2)$, which is the parameter of the map (see \cite{ChGo} for full details).

%MAIN
\section{Main theorem \& corollary}

Let us first introduce some notations.
%By perturbed system we mean a dynamical system perturbed by observational noise in the sense of definition \ref{perturbed}, assuming the standing hypothesis on the noise.
We recall that $P$ is the common distribution of the random variables $\xi_{i}$. 
The expected value with respect to a measure $\nu$ is denoted by $\ee_{\nu}$. 
Recall the expression \eqref{marginal} for the measure $\mu_{n}$. 
%Consider the function of $n$ variables $K(x_{0},\ldots,x_{n-1})$ and let us denote by $\ee_{\mu_{n}}$ the following expectation,
Hence in particular
\begin{align*}
\ee_{\mu_{n}}(K) =& \int\cdots\int K(x_{0},\ldots,x_{n-1})\dd\mu_{n}(x_{0},\ldots,x_{n-1})\\
=& \int K(x,\ldots, T^{n-1}x)\dd\mu(x).
\end{align*}
Next, we denote by $\mup$ the product of the measures $\mu_{n}$ and $P^{n}$, where $P^{n}$ stands for $P\otimes\cdots\otimes P$ ($n$ times).
% on $X^{n}\times\Omega^{n}$. 
The expected value of $K(y_{0}, \ldots,y_{n-1})$ is denoted by
\[
\ee_{\mup}(K) := \int K(x + \varepsilon\xi_{0},\ldots, T^{n-1}x+\varepsilon\xi_{n-1})\dd\mu(x)\dd{P}(\xi_{0})\cdots\dd{P}(\xi_{n-1}).
\]

Our main result is the following.

\begin{thm}\label{PerturbedIneq}
%Consider the perturbed system $\{y_{i}\}$. 
If the original system $(X,T,\mu)$ sa\-tis\-fies the exponential inequality \eqref{exp-ineq}, then the observed system satisfies an exponential concentration inequality. For any $n\geq1$, it is given by
\begin{equation}\label{exp-ineq-pert}
\ee_{\mup}\left( e^{K(y_{0},\ldots,y_{n-1})-\ee_{\mup}(K(y_{0},\ldots,y_{n-1}))}\right)\leq e^{D(1+\varepsilon^{2})\sum_{j=0}^{n-1}\Lip_{j}(K)^{2}},
\end{equation}
Furthermore, if the system $(X,T,\mu)$ satisfies the polynomial concentration  inequality \eqref{poly-ineq} with moment $q\geq2$, then the observed system satisfies a polynomial concentration inequality with the same moment. For any $n\geq1$, it is given by
\begin{equation}\label{poly-ineq-pert}
\begin{split}
\ee_{\mup}\left( \left\lvert K(y_{0},\ldots,y_{n-1})-\ee_{\mup}(K(y_{0},\ldots,y_{n-1}))\right\rvert^{q} \right) \leq D_{q}(1+\varepsilon)^{q}\Big(\sum_{j=0}^{n-1}\Lip_{j}(K)^{2}\Big)^{q/2}.
\end{split}
\end{equation}
\end{thm}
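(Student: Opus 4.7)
The plan is to separate the randomness coming from the dynamics from the randomness coming from the noise and handle them one after the other. For fixed realization of the orbit $(x_0, \ldots, x_{n-1}) = (x, Tx, \ldots, T^{n-1}x)$, the random variable $K(y_0, \ldots, y_{n-1})$ is a function of the i.i.d. noise $(\xi_0, \ldots, \xi_{n-1})$; for fixed noise, its conditional expectation is a function of the orbit. The central tool is therefore the tower property.

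\textbf{Setup.} Introduce
\[
g(x_{0},\ldots,x_{n-1}) := \int K(x_{0}+\varepsilon\xi_{0},\ldots,x_{n-1}+\varepsilon\xi_{n-1})\,\dd P(\xi_{0})\cdots \dd P(\xi_{n-1}),
\]
so that $g$ is the conditional expectation of $K(y_{0},\ldots,y_{n-1})$ given the orbit, and $\ee_{\mu_{n}}(g)=\ee_{\mup}(K)$. Write
\[
K(y_{0},\ldots,y_{n-1})-\ee_{\mup}(K) \;=\; \bigl[K(y_{0},\ldots,y_{n-1})-g(x_{0},\ldots,x_{n-1})\bigr] + \bigl[g(x_{0},\ldots,x_{n-1})-\ee_{\mu_{n}}(g)\bigr].
\]
A short computation using translation invariance of the metric and the triangle inequality gives $\Lip_{j}(g)\leq \Lip_{j}(K)$, and for every fixed orbit the map $\xi\mapsto K(x_{0}+\varepsilon\xi_{0},\ldots)$ is separately Lipschitz with constants $\varepsilon\,\Lip_{j}(K)$.

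\textbf{Exponential case.} Condition on the orbit and use the tower property:
\[
\ee_{\mup}\!\left(e^{K(y_{0},\ldots,y_{n-1})-\ee_{\mup}(K)}\right)= \ee_{\mu_{n}}\!\Bigl(e^{g-\ee_{\mu_{n}}(g)}\,\ee_{P^{n}}\!\bigl[e^{K(y_{0},\ldots,y_{n-1})-g}\bigm| x\bigr]\Bigr).
\]
For the inner expectation, the noise is bounded i.i.d., so by Remark \ref{remark-iid} it satisfies the exponential concentration inequality \eqref{exp-ineq} with some constant $C'$; applied with Lipschitz constants $\varepsilon\,\Lip_{j}(K)$ this gives an inner bound $e^{C'\varepsilon^{2}\sum_{j}\Lip_{j}(K)^{2}}$, which is deterministic in $x$ and can be pulled out. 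The remaining factor is handled by applying \eqref{exp-ineq} for the unperturbed system to $g$ with $\Lip_{j}(g)\leq\Lip_{j}(K)$, yielding $e^{C\sum_{j}\Lip_{j}(K)^{2}}$. Setting $D:=\max(C,C')$ one obtains \eqref{exp-ineq-pert}.

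\textbf{Polynomial case.} Work in $L^{q}(\mup)$ and apply Minkowski's inequality to the decomposition above. For the first summand, Fubini and the i.i.d.\ polynomial concentration inequality (Remark \ref{remark-iid}) yield, for each orbit,
\[
\ee_{P^{n}}\!\bigl[|K(y_{0},\ldots,y_{n-1})-g|^{q}\bigm| x\bigr]\leq C_{q}'\,\varepsilon^{q}\Bigl(\sum_{j=0}^{n-1}\Lip_{j}(K)^{2}\Bigr)^{q/2},
\]
and integrating over $\mu_{n}$ preserves this bound. For the second summand, \eqref{poly-ineq} applied to $g$ gives $C_{q}\bigl(\sum_{j}\Lip_{j}(K)^{2}\bigr)^{q/2}$. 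Taking $q$-th roots, summing via Minkowski, and raising to the $q$-th power produces the factor $(1+\varepsilon)^{q}$ with $D_{q}:=2^{q}\max(C_{q},C_{q}')$ (or any constant absorbing $(1+\varepsilon)$ vs.\ $(\varepsilon+1)$), which proves \eqref{poly-ineq-pert}.

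\textbf{Main obstacle.} The only non-routine step is verifying cleanly that $g$ inherits the separately Lipschitz structure with the \emph{same} constants $\Lip_{j}(K)$, so that the original system's concentration inequality can be applied to it; this relies crucially on the noise being additive and independent of the orbit, which allows $\Lip_{j}$ to pass under the $P^{n}$-integral. Once this observation is in place, the exponential and polynomial statements follow in parallel from the two-step conditioning argument above.
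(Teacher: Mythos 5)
Your argument is correct, and it is the mirror image of the paper's: the paper fixes the noise $\overline{\xi}$ first, applies the dynamical concentration inequality to the auxiliary observable $\Kxi$ (which has $\Lip_j(\Kxi)=\Lip_j(K)$), and then applies the i.i.d.\ concentration inequality to $F(\overline{\xi})=\ee_{\mu_n}(\Kxi)$ (which has $\Lip_j(F)\le\varepsilon\Lip_j(K)$); you instead condition on the orbit, apply the i.i.d.\ inequality to $\xi\mapsto K(x_0+\varepsilon\xi_0,\ldots)$ with constants $\varepsilon\Lip_j(K)$, and then the dynamical inequality to the noise-averaged observable $g$ with $\Lip_j(g)\le\Lip_j(K)$. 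Since the two sources of randomness are independent and each inner conditional bound is uniform in the conditioning variable, the two orders of conditioning are interchangeable and yield the same constants; both hinge on the same observation (which you rightly flag) that conditional expectation passes the separately-Lipschitz structure through, using additivity of the noise and translation invariance of the metric on $X\subset\rr^{n}$. Where your route genuinely improves on the paper is the polynomial case: the paper expands $(|A|+|B|)^{q}$ binomially and must treat the cross terms $p=1$ and $p=q-1$ separately via Cauchy--Schwarz (and implicitly needs $q$ to be an integer), whereas your Minkowski inequality in $L^{q}(\mup)$ gives $\|A+B\|_{q}\le\|A\|_{q}+\|B\|_{q}\le\bigl(C_{q}^{1/q}+ (C'_{q})^{1/q}\varepsilon\bigr)\bigl(\sum_{j}\Lip_{j}(K)^{2}\bigr)^{1/2}$ directly, which produces the factor $(1+\varepsilon)^{q}$ in one line, works for any real $q\ge2$, and even shows $D_{q}=\max(C_{q},C'_{q})$ suffices (your $2^{q}\max$ is a harmless overestimate).
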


Observe that one recovers the corresponding concentration inequalities for the original dynamical system when $\varepsilon$ vanishes.

\begin{rem}
Our proof works provided the noise process satisfies a concentration inequality (see Remark \ref{remark-iid}).  We have stated the result in the particular case of i.i.d. noise because it is reasonable to model the observational perturbations in this manner. Nevertheless, one can slightly modify the proof to get the result valid for correlated perturbations.
\end{rem}

%proof of the theorem 1
\begin{proof}[Proof of theorem \ref{PerturbedIneq}]
First let us fix the noise $\{\xi_{j}\}$ and let $\overline{\xi} := (\xi_{0}, \xi_{1}, \ldots,\xi_{n-1})$. Introduce the auxiliary observable
\[
\Kxi(x_{0},\ldots,x_{n-1}) := K(x_{0} + \varepsilon\xi_{0}, \ldots,x_{n-1}+\varepsilon\xi_{n-1}).
\]

Since the noise is fixed, it is easy to see that $\Lip_{j}(\Kxi)=\Lip_{j}(K)$ for all $j$. 

\noindent
Notice that $\Kxi(x,\ldots, T^{n-1}x)= K(x+\varepsilon\xi_{0},\ldots, T^{n-1}x+\varepsilon\xi_{n-1}) = K(y_{0},\ldots,y_{n-1})$. Next we define the observable $F(\xi_{0}, \ldots, \xi_{n-1})$ of $n$ variables on the noise, as follows,
\[
F(\xi_{0}, \ldots, \xi_{n-1}):= \ee_{\mu_{n}}(\Kxi(x,\ldots, T^{n-1}x)).
\]
Observe that, $\Lip_{j}(F) \leq \varepsilon\Lip_{j}(K)$. 

\noindent
Now we prove inequality \eqref{exp-ineq-pert}. Observe that is equivalent to prove the inequality for
\begin{equation*}
%\begin{split}
%\ee_{\mup}\left(e^{K(y_{0},\ldots,y_{n-1})-\ee_{\mup}(K(y_{0},\ldots,y_{n-1}))} \right) =
\ee_{\mup}\left(e^{\Kxi(x,\ldots,T^{n-1}x)-\ee_{\mup}(\Kxi(x,\ldots,T^{n-1}x))} \right). 
%\end{split}
\end{equation*}
Adding and subtracting $\ee_{\mup}(\Kxi(x, \ldots, T^{n-1}x))$ and using the independence between the noise and the dynamical system, we obtain that the expression above is equal to
\[
\ee_{\mu_{n}}\left(e^{\Kxi(x, \ldots,T^{n-1}x)- \ee_{\mu_{n}}(\Kxi(x,\ldots,T^{n-1}x))} \right)\ee_{P^{n}}\left(e^{F(\xi_{0}, \ldots, \xi_{n-1})-\ee_{P^{n}}(F(\xi_{0}, \ldots, \xi_{n-1}))} \right).
\]
Since in particular, i.i.d. bounded processes satisfy the exponential concentration inequality (see Remark \ref{remark-iid} above), we may apply \eqref{exp-ineq} to the dynamical system and the noise, yielding
\begin{equation*}
\begin{split}
\ee_{\mu_{n}}\left(e^{\Kxi(x, \ldots,T^{n-1}x)- \ee_{\mu_{n}}(\Kxi(x,\ldots,T^{n-1}x))} \right)\ee_{P^{n}}\left(e^{F(\xi_{0}, \ldots, \xi_{n-1})-\ee_{P^{n}}(F(\xi_{0}, \ldots, \xi_{n-1}))} \right)\\
\leq e^{C\sum_{j=0}^{n-1}\Lip_{j}(\Kxi)^{2}}e^{C'\varepsilon^{2}\sum_{j=0}^{n-1}\Lip_{j}(F)^{2}} \leq e^{D(1+\varepsilon^{2})\sum_{j=0}^{n-1}\Lip_{j}(K)^{2}},
\end{split}
\end{equation*}
where $D := \max\{C,C'\}$. 

\noindent
Next, we prove inequality \eqref{poly-ineq-pert} similarly. We use the binomial expansion after the triangle inequality with $\ee_{\mu_{n}}(\Kxi(x,\ldots,T^{n-1}x))$. Using the independence between the noise and the dynamics, we get
%Clearly, $\lvert K(y_{0},\ldots, y_{n-1}) - \EmuP(K(y_{0},\ldots,y_{n-1}))\rvert^{q}$ is equal to
%\begin{equation*}
%\begin{split}
%\lvert K(y_{0},\ldots, y_{n-1}) - \EmuP(K(y_{0},\ldots,y_{n-1}))\rvert^{q}=\hspace{6cm}\\
%\lvert\Kxi(x, \ldots,T^{n-1}x)-\Emu(\Kxi(x,\ldots,T^{n-1}x)) + F(\xi_{0},\ldots,\xi_{n-1}) -\EP(F(\xi_{0},\ldots,\xi_{n-1}))\rvert^{q}.
%\end{split}
%\end{equation*}
%Using the binomial expansion and the independence of the noise from the dynamics, we obtain
\begin{equation}\label{sep}
\begin{split}
\ee_{\mup}( \lvert K(y_{0},\ldots, y_{n-1}) - \ee_{\mup}(K(y_{0},\ldots,y_{n-1}))\rvert^{q}) \hspace{4cm}\\
\leq\sum_{p=0}^{q}\Big(\begin{array}{c} q\\p \end{array}\Big)\ee_{\mu_{n}}(\lvert \Kxi(x,\ldots,T^{n-1}x) - \ee_{\mu_{n}}(\Kxi(x,\ldots,T^{n-1}x))\rvert^{p})\times\\
\ee_{P^{n}}\left(\lvert F(\xi_{0},\ldots,\xi_{n-1})-\ee_{P^{n}}(F(\xi_{0},\ldots,\xi_{n-1}))\rvert^{q-p} \right).
\end{split}
\end{equation}

%From Lyapunov's inequality one can prove that if the polynomial concentration i\-ne\-qua\-li\-ty with moment $q\geq2$ is satisfied, then all the moments $p$ for $2\leq p < q$ are also satisfied.  
\noindent
We proceed carefully using the polynomial concentration inequality. The terms corresponding to $p=1$ and $p=q-1$ have to be treated separately. For the rest we obtain the bound
\begin{equation*}
%\begin{split}
%\EmuP( \lvert K(y_{0},\ldots, y_{n-1}) - \EmuP(K(y_{0},\ldots,y_{n-1}))\rvert^{q}) \hspace{4cm}\\
%\leq
\sum_{\substack{ p=0\\ p\neq1,q-1}}^{q} \binom{q}{p}
C_{p}\Big(\sum_{j=0}^{n-1}\Lip_{j}(K)^{2}\Big)^{p/2}\times C'_{q-p}\Big(\varepsilon^{2}\sum_{j=0}^{n-1}\Lip_{j}(K)^{2}\Big)^{\frac{q-p}{2}}.
%+\\ 
%q\cdot\Emu\left(\lvert \Kxi(x,\ldots,T^{n-1}x)-\Emu(\Kxi(x,\ldots,T^{n-1}x))\rvert\right)\cdot C'_{q-1}\Big(\varepsilon^{2}\sum_{j=0}^{n-1}\Lip_{j}(K)^{2}\Big)^{\frac{q-1}{2}}+\\
%q\cdot C_{q-1}\Big(\sum_{j=0}^{n-1}\Lip_{j}(K)^{2}\Big)^{\frac{q-1}{2}}\cdot\EP\left(\lvert F(\xi_{0},\ldots,\xi_{n-1}-\EP(F(\xi_{0},\ldots,\xi_{n-1}))\rvert\right).
%\end{split}
\end{equation*}
For the case $p=1$, we apply Cauchy-Schwarz inequality and \eqref{poly-ineq} for $q=2$ to get
\[
\ee_{\mu_{n}}\left(\lvert \Kxi(x,\ldots,T^{n-1}x)-\ee_{\mu_{n}}(\Kxi(x,\ldots,T^{n-1}x))\rvert\right)\leq \sqrt{C_{2}}\Big(\sum_{j=0}^{n-1}\Lip_{j}(K)^{2}\Big)^{1/2}.
\]
If $q=2$, we proceed in the same way for the second factor in the right hand side of \eqref{sep}.
The case $p=q-1$ is treated similarly. Finally, putting this together and choosing adequately the constant $D_{q}$ we obtain the desired bound.
\end{proof}

Next we obtain an estimate of deviation probability of the observable $K$ from its expected value.

\begin{coro} 
If the system $(X,T,\mu)$ satisfies the exponential concentration inequality, then for the observed system $\{y_{i}\}$, for every $t>0$ and for any $n\geq1$ we have,
\begin{equation}\label{exp-deviation-pert}
\mup\big( \lvert K(y_{0},\ldots,y_{n-1})-\ee_{\mup}(K)\rvert \geq t \big)\leq 2\exp\left(\frac{-t^{2}}{4D(1+\varepsilon^{2})\sum_{j=0}^{n-1}\Lip_{j}(K)^{2}}\right).
\end{equation}
If the system $(X,T,\mu)$ satisfies the polynomial concentration inequality with moment $q\geq2$, then the observed system satisfies for every $t>0$ and for any $n\geq1$,
\begin{equation}\label{poly-deviation-pert}
\begin{split}
\mup\big(\lvert K(y_{0},\ldots,y_{n-1}) - \ee_{\mup}(K)\rvert >t\big) \leq 
\frac{D_{q}}{t^{q}}(1+\varepsilon)^{q}\left(\sum_{j=0}^{n-1}\Lip_{j}(K)^{2}\right)^{q/2}.
\end{split}
\end{equation}
\end{coro}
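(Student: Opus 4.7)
The plan is to derive the corollary from Theorem \ref{PerturbedIneq} in exactly the same way as inequalities \eqref{exp-prob-deviation} and \eqref{poly-prob-deviation} were derived from \eqref{exp-ineq} and \eqref{poly-ineq}, namely by combining a Chernoff/Markov argument with the concentration inequality already established for the observed system. Write $Z := K(y_{0},\ldots,y_{n-1}) - \ee_{\mup}(K)$ and note that by construction $\ee_{\mup}(Z)=0$.

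For the exponential bound I would first prove a one-sided estimate. For any $\lambda>0$, the exponential Markov inequality gives $\mup(Z>t) \leq e^{-\lambda t}\ee_{\mup}(e^{\lambda Z})$. Applying Theorem \ref{PerturbedIneq} to the rescaled observable $\lambda K$, whose partial Lipschitz constants are $\Lip_{j}(\lambda K) = \lambda \Lip_{j}(K)$, yields
\[
\ee_{\mup}(e^{\lambda Z}) \leq \exp\left( D(1+\varepsilon^{2})\lambda^{2}\sum_{j=0}^{n-1}\Lip_{j}(K)^{2} \right).
\]
Optimizing over $\lambda>0$ with the choice $\lambda = t / \left(2D(1+\varepsilon^{2})\sum_{j}\Lip_{j}(K)^{2}\right)$ produces the Gaussian-type bound $\exp\!\left(-t^{2}/\left(4D(1+\varepsilon^{2})\sum_{j}\Lip_{j}(K)^{2}\right)\right)$ for $\mup(Z>t)$. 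Applying the same argument to the observable $-K$, whose Lipschitz constants coincide with those of $K$, furnishes the symmetric bound for $\mup(Z<-t)$, and a union bound yields the factor $2$ in \eqref{exp-deviation-pert}.

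For the polynomial case the argument is even more direct: the ordinary Markov inequality applied to the nonnegative random variable $|Z|^{q}$ gives $\mup(|Z|>t) \leq \ee_{\mup}(|Z|^{q})/t^{q}$, and inequality \eqref{poly-ineq-pert} of Theorem \ref{PerturbedIneq} bounds the numerator by $D_{q}(1+\varepsilon)^{q}\left(\sum_{j}\Lip_{j}(K)^{2}\right)^{q/2}$, which is precisely \eqref{poly-deviation-pert}.

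There is no real obstacle here; the only point to verify is that the rescaling $K \mapsto \lambda K$ in the exponential step preserves the separately Lipschitz property and that Theorem \ref{PerturbedIneq} applies with constants depending only on $T$ and $\varepsilon$, not on $\lambda$, so that the optimization over $\lambda$ is legitimate. Everything else is the standard Chernoff/Markov machinery already invoked in the derivation of \eqref{exp-prob-deviation} and \eqref{poly-prob-deviation}.
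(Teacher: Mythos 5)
Your proposal is correct and coincides with the paper's own (omitted, "left to the reader") argument: a Chernoff bound applied to the rescaled observable $\lambda K$ using \eqref{exp-ineq-pert}, optimization at $\lambda = t/\bigl(2D(1+\varepsilon^{2})\sum_{j}\Lip_{j}(K)^{2}\bigr)$, a union bound over $\pm K$ for the factor $2$, and Markov's inequality on $|Z|^{q}$ together with \eqref{poly-ineq-pert} for the polynomial case. Nothing further is needed.
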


The proof is straightforward and left to the reader.

\begin{comment}
\begin{proof}
Consider first the exponential case. For every $\lambda>0$, the function $\lambda K$ is Lipschitz. Using the Markov inequality we have for every $t>0$ 
\begin{equation*}
\begin{split}
\mup\left(\{ K(y_{0},\ldots,y_{n-1})\geq \ee_{\mup}(K) + t \} \right)\leq e^{-\lambda t}\ee_{\mup}\left( e^{\lambda [K(y_{0},\ldots,y_{n-1}) - \ee_{\mup}(K)]}\right).
\end{split}
\end{equation*}
Using \eqref{exp-ineq-pert} we obtain
\[
\mup\left(\{ K(y_{0},\ldots,y_{n-1})\geq \ee_{\mup}(K) + t \}\right) \leq e^{-\lambda t}e^{\lambda^{2}D(1+\varepsilon^{2})\sum_{j=0}^{n-1}\Lip_{j}(K)^{2}},
\]
and optimizing on $\lambda$ we get the result. Indeed the minimum of the function $f(\lambda) = -\lambda t+ \lambda^{2}A$ is reached for $\lambda =\frac{t}{2A} $, where $A=D(1+\varepsilon^{2})\sum_{j=0}^{n-1}\Lip_{j}(K)^{2}$. The polynomial case follows at once from inequality \eqref{poly-ineq-pert} and Markov's inequality.
\end{proof}
\end{comment}

%Applications
\section{Applications}

%Dynamical Systems
\subsection{Dynamical systems}
Concentration inequalities are available for the class of non-uniformly hyperbolic dynamical systems modeled by Young towers (\cite{ChGo}). Actually, systems with exponential tails satisfy an exponential concentration inequality and if the tails are polynomial then the system satisfies a polynomial concentration inequality. The examples given in section 2 are included in that class of dynamical systems. We refer the interested reader to \cite{Y1} and \cite{Y2} for more details on systems modeled by Young towers. 
%Once we know a particular system belongs to the abstract setting of Young towers, then we may apply concentration inequalities to it. The examples given in section 2 are included in that class of dynamical systems. We refer the interested reader to \cite{Y1} and \cite{Y2} for more details on that class of systems. 
Here we consider dynamical systems satisfying either the exponential or the polynomial concentration inequality. We apply our result of concentration in the setting of observed systems to empirical estimators of the auto-covariance function, the em\-pi\-ri\-cal measure, the kernel density estimator and the correlation dimension. 

%Covariance Function
\subsection{Auto-covariance function}

Consider the dynamical system $(X,T,\mu)$ and a square integrable observable $f:X\to\rr$. Assume that $f$ is such that $\int f\dd\mu=0$. We remind that the auto-covariance function of $f$ is given by
\begin{equation*}
\mathrm{Cov}(k):= \int f(x)f(T^{k}x)\dd\mu(x).
\end{equation*}
In practice, one has a finite number of iterates of some $\mu$-typical initial condition $x$, thus, what we may easily obtain from the data is the empirical estimator of the auto-covariance function: 
\begin{equation*}
\Cemp(k) := \frac{1}{n}\sum_{i=0}^{n-1}f(T^{i}x)f(T^{i+k}x).
\end{equation*}
From Birkhoff's ergodic theorem it follows that $\mathrm{Cov}(k)=\lim_{n\to\infty}\Cemp(k)$ $\mu$-almost surely. Observe that the expected value of the estimator $\Cemp(k)$ is exactly $\mathrm{Cov}(k)$.

The following result gives us \textit{a priori} theoretical bounds to the fluctuations of the estimator $\Cemp$ around $\mathrm{Cov}$ for every $n$. This result can be found in \cite{ChGo}, here we include it for the sake of completeness.

\begin{prop}\label{Prop-Concen-Cov}
Let $\mathrm{Cov}(k)$ and $\Cemp(k)$ be defined as above. If the dynamical system $(X,T,\mu)$ satisfies the exponential concentration inequality \eqref{exp-ineq} then for all $t>0$ and any integer $n\geq1$ we have
\begin{equation*}
\mu\left( \left\lvert \Cemp(k) - \mathrm{Cov}(k) \right\rvert > t \right) \leq 2\exp\left(\frac{-t^{2}}{16Ca_{f}^{2} }\left(\frac{n^{2}}{n+k}\right) \right),
\end{equation*}
where $a_{f} = \Lip(f)\lVert{f}\rVert_{\infty}$ and $C$ is the constant appearing in \eqref{exp-ineq}.

\noindent If the system satisfies the polynomial concentration inequality \eqref{poly-ineq} with moment $q\geq2$, then for all $t>0$ and any integer $n\geq1$ we have
\begin{equation*}
\mu\left( \left\lvert \Cemp(k)-\mathrm{Cov}(k)\right\rvert>t \right)\leq C_{q}\left(\frac{2a_{f}}{t}\right)^{q}\left(\frac{n+k}{n^{2}}\right)^{q/2},
\end{equation*}
where $C_{q}$ is the constant appearing in \eqref{poly-ineq}.
\end{prop}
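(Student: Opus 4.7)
The plan is to apply the deviation bounds \eqref{exp-prob-deviation} and \eqref{poly-prob-deviation} directly to the estimator, viewed as a separately Lipschitz function of the orbit. Concretely, I would define the function of $n+k$ variables
\[
K(x_{0}, \ldots, x_{n+k-1}) := \frac{1}{n}\sum_{i=0}^{n-1} f(x_{i})\, f(x_{i+k}),
\]
so that $K(x, Tx, \ldots, T^{n+k-1}x) = \Cemp(k)$, and by invariance of $\mu$ and the assumption $\int f\, \dd\mu = 0$ together with Birkhoff one checks that $\ee_{\mu}(K) = \mathrm{Cov}(k)$.

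The next step is to estimate $\Lip_{j}(K)$ for each $0 \leq j \leq n+k-1$. The variable $x_{j}$ enters the sum at most twice: once through the factor $f(x_{i})$ when $i = j$ (provided $0 \leq j \leq n-1$), and once through $f(x_{i+k})$ when $i = j-k$ (provided $k \leq j \leq n+k-1$). Using $|f(x_{j}) - f(x_{j}')| \leq \Lip(f)\, d(x_{j}, x_{j}')$ and bounding the companion factor by $\|f\|_{\infty}$, each contribution is at most $\frac{1}{n}\Lip(f)\|f\|_{\infty}\, d(x_{j}, x_{j}')$, whence
\[
\Lip_{j}(K) \leq \frac{2\, a_{f}}{n}, \qquad a_{f} = \Lip(f)\|f\|_{\infty}.
\]
Since only $n + k$ indices contribute nontrivially, this yields
\[
\sum_{j=0}^{n+k-1} \Lip_{j}(K)^{2} \leq \frac{4\, a_{f}^{2}\,(n+k)}{n^{2}}.
\]

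For the exponential case, plugging this into \eqref{exp-prob-deviation} gives
\[
\mu\bigl(\lvert\Cemp(k) - \mathrm{Cov}(k)\rvert > t\bigr)
\leq 2\exp\left(\frac{-t^{2}}{4C \cdot 4 a_{f}^{2}(n+k)/n^{2}}\right)
= 2\exp\left(\frac{-t^{2}}{16 C a_{f}^{2}}\cdot\frac{n^{2}}{n+k}\right),
\]
as claimed. For the polynomial case, \eqref{poly-prob-deviation} yields
\[
\mu\bigl(\lvert\Cemp(k) - \mathrm{Cov}(k)\rvert > t\bigr)
\leq \frac{C_{q}}{t^{q}}\left(\frac{4 a_{f}^{2}(n+k)}{n^{2}}\right)^{q/2}
= C_{q}\left(\frac{2 a_{f}}{t}\right)^{q}\left(\frac{n+k}{n^{2}}\right)^{q/2}.
\]

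This proof is essentially a bookkeeping exercise; there is no genuine obstacle. The only point that requires a moment's care is the Lipschitz computation: one must notice that every coordinate $x_{j}$ shows up in at most two terms of the sum, so the prefactor $2$ (hence the constant $16 = 4\cdot 4$) is sharp, and that the effective number of variables is $n+k$ rather than $n$, which is precisely what produces the $(n+k)/n^{2}$ factor in both displays.
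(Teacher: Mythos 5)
Your proof is correct and follows essentially the same route as the paper: the same observable $K$ of $n+k$ variables, the same Lipschitz bound $\Lip_{j}(K)\leq 2a_{f}/n$, and a direct application of the deviation inequalities \eqref{exp-prob-deviation} and \eqref{poly-prob-deviation}. The only cosmetic difference is that the paper derives the two-sided exponential bound by applying the one-sided estimate to $K$ and $-K$ with a union bound, whereas you invoke the two-sided form directly; also note that identifying $\ee_{\mu_{n+k}}(K)=\mathrm{Cov}(k)$ needs only the $T$-invariance of $\mu$, not Birkhoff's theorem.
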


\begin{proof}
Consider the following observable of $n+k$ variables,
\[
K(z_{0},\ldots,z_{n+k-1}):= \frac{1}{n}\sum_{i=0}^{n-1}f(z_{i})f(z_{i+k}).
\]
In order to estimate the Lipschitz constant of $K$, consider $0\leq l\leq n+k-1$ and replace the value $z_{l}$ with $z'_{l}$. Note that the absolute value of the difference between $K(z_{0},\ldots,z_{l},\ldots,z_{n+k-1})$ and
$K(z_{0},\ldots,z'_{l},\ldots,z_{n+k-1})$ is less than or equal to 
\[
\frac{1}{n}\left\lvert f(z_{l-k})f(z_{l}) + f(z_{l})f(z_{l+k})-f(z_{l-k})f(z'_{l})-f(z'_{l})f(z_{l+k})\right\rvert,
\] 
and so for every index $l$, we have that

\begin{equation*}
\Lip_{l}(K) \leq \sup_{z_{0},\ldots,z_{n+k-1}}\sup_{z_{l}\neq z'_{l}} \frac{1}{n}\frac{\lvert(f(z_{l})-f(z'_{l}))(f(z_{l-k})+f(z_{l+k})) \rvert}{d(z_{l}, z'_{l})}\leq  \frac{2}{n}\Lip(f)\lVert f\rVert_{\infty}.
\end{equation*}

Next, if the exponential inequality holds, we use \eqref{exp-prob-deviation} to obtain
\begin{align*}
\mu\left(\Cemp(k)-\mathrm{Cov}(k)>t\right) \leq& \exp\left(\frac{-t^{2}}{16C\Lip(f)^{2}\lVert f\rVert^{2}_{\infty}}\left( \frac{n^{2}}{n+k}\right) \right).
\end{align*}
Applying similarly the inequality to the function $-K$, we get the result by a union bound. The polynomial case follows from inequality \eqref{poly-prob-deviation}.
\end{proof}

%  AutoCovariance for observed systems
\subsubsection{Auto-covariance function for observed systems}
Let us consider the observed orbit $y_{0},\ldots,y_{n-1}$. Define the observed empirical estimator of the auto-covariance function as follows
\begin{equation}\label{cov-funct-pert}
\Cpert(k) := \frac{1}{n}\sum_{i=0}^{n-1}f(y_{i})f(y_{i+k}).
\end{equation}
We are interested in quantifying the influence of noise on the correlation. We provide a bound on the probability of the deviation of the observed empirical estimator from the covariance function.

\begin{thm}
Let $\Cpert(k)$ be given by \eqref{cov-funct-pert}. If the dynamical system $(X,T,\mu)$ satisfies the exponential inequality \eqref{exp-ineq} then for all $t>0$ and for any integer $n\geq1$ we have
\begin{equation*}
\begin{split}
\mup\left(\left\lvert \Cpert(k) -\mathrm{Cov}(k)\right\rvert > t + 2a_{f}\varepsilon \right)\leq
2\exp\left( \frac{-t^{2}}{64Da_{f}^{2}(1+\varepsilon^{2})}\left(\frac{n^{2}}{n+k}\right)\right)\\ + 2\exp\left(\frac{-t^{2}}{16Ca_{f}^{2} }\left(\frac{n^{2}}{n+k}\right)\right),
\end{split}
\end{equation*}
where $a_{f}=\Lip(f)\lVert f\rVert_{\infty}$, $C$ and $D$ are the constants appearing in \eqref{exp-ineq} and \eqref{exp-ineq-pert} respectively. 
If the system satisfies the polynomial inequality with moment $q\geq2$, then for all $t>0$ and any integer $n\geq1$ we have
\begin{equation*}
\mup\left( \left\lvert \Cpert(k) - \mathrm{Cov}(k)\right\rvert >t + 2a_{f}\varepsilon \right)\leq 
\left(2^{q}D_{q}(1+\varepsilon)^{q}+C_{q}\right)\left(\frac{2a_{f}}{t}\right)^{q}\left(\frac{n+k}{n^{2}}\right)^{q/2},
\end{equation*}
where $C_{q}$ and $D_{q}$ are the constants appearing in \eqref{poly-ineq} and \eqref{poly-ineq-pert} respectively.
\end{thm}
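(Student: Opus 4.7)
The plan is to recycle the separately-Lipschitz kernel from Proposition~\ref{Prop-Concen-Cov}'s proof, namely $K(z_{0},\ldots,z_{n+k-1}) := \frac{1}{n}\sum_{i=0}^{n-1}f(z_{i})f(z_{i+k})$, so that $\Cpert(k) = K(y_{0},\ldots,y_{n+k-1})$ and $\Cemp(k) = K(x,Tx,\ldots,T^{n+k-1}x)$. The calculation already performed there gives $\Lip_{l}(K) \leq 2a_{f}/n$ for every $l$, whence $\sum_{l=0}^{n+k-1}\Lip_{l}(K)^{2} \leq 4a_{f}^{2}(n+k)/n^{2}$, exactly the Lipschitz sum appearing in the stated bound.

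I would next control the noise-induced bias by a purely deterministic estimate. Using $\|\xi_{i}\|\leq 1$, the Lipschitz property of $f$, and the product split $|a'b'-ab|\leq|a'-a||b'|+|a||b'-b|$, a termwise bound gives $|f(y_{i})f(y_{i+k})-f(T^{i}x)f(T^{i+k}x)| \leq 2\Lip(f)\|f\|_{\infty}\varepsilon = 2a_{f}\varepsilon$, hence $|\Cpert(k)-\Cemp(k)| \leq 2a_{f}\varepsilon$ pointwise; integrating and using $\ee_{\mu_{n}}(\Cemp(k)) = \mathrm{Cov}(k)$ also yields $|\ee_{\mup}(\Cpert(k)) - \mathrm{Cov}(k)| \leq 2a_{f}\varepsilon$.

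With these two facts the deviation event decomposes into two concentration events: $\{|\Cpert(k)-\mathrm{Cov}(k)|>t+2a_{f}\varepsilon\}$ lies inside $\{|\Cpert(k)-\ee_{\mup}(\Cpert(k))|>t/2\} \cup \{|\Cemp(k)-\mathrm{Cov}(k)|>t\}$, as one checks via the triangle inequality $|\Cpert-\mathrm{Cov}|\leq|\Cpert-\ee_{\mup}(\Cpert)|+|\ee_{\mup}(\Cpert)-\mathrm{Cov}|$ together with the bias bound (on the complement of both events, $|\Cpert-\mathrm{Cov}|\leq t/2 + 2a_{f}\varepsilon < t + 2a_{f}\varepsilon$). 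I would then apply the corollary deviation estimate~\eqref{exp-deviation-pert} at level $t/2$ to the perturbed fluctuation---producing the first exponential with constant $64Da_{f}^{2}(1+\varepsilon^{2})$ (the factor $4$ relative to \eqref{exp-deviation-pert} coming from the $t/2$ split)---and Proposition~\ref{Prop-Concen-Cov} at level $t$ to the dynamical fluctuation---producing the second exponential with constant $16Ca_{f}^{2}$. A union bound finishes the exponential case.

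The polynomial case is carried out by the same decomposition: \eqref{poly-deviation-pert} applied at level $t/2$ to the perturbed fluctuation gives the $2^{q}D_{q}(1+\varepsilon)^{q}$ piece (the $2^{q}$ arising from the $t/2$ split in the Markov step), and the polynomial half of Proposition~\ref{Prop-Concen-Cov} applied at level $t$ to the dynamical fluctuation gives the $C_{q}$ piece. There is no real obstacle here; the only delicate point is bookkeeping, namely choosing the split $t+2a_{f}\varepsilon = (t/2+2a_{f}\varepsilon)+(t/2)$ so that the specific constants $64$, $16$, and $2^{q}$ stated in the theorem emerge cleanly from the two applications, and verifying that the pointwise bias estimate $2a_{f}\varepsilon$ is tight enough to absorb the offset in the deviation event.
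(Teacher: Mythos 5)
Your proof is correct and it reproduces the stated constants exactly, but the decomposition is genuinely different from the paper's. The paper centers at the unperturbed estimator: it writes $\Cpert(k)-\mathrm{Cov}(k)=(\Cpert(k)-\Cemp(k))+(\Cemp(k)-\mathrm{Cov}(k))$, introduces the \emph{difference} observable $K(z_{0},\ldots,z_{n+k-1})=\frac{1}{n}\sum_{i}\bigl(f(z_{i}+\varepsilon e_{i})f(z_{i+k}+\varepsilon e_{i+k})-f(z_{i})f(z_{i+k})\bigr)$ with $\Lip_{l}(K)\leq 4a_{f}/n$ (that doubled Lipschitz constant is where the $64$ and the $2^{q}$ come from there), bounds $\ee_{\mup}\lvert\Cpert(k)-\Cemp(k)\rvert\leq 2a_{f}\varepsilon$, applies the perturbed deviation inequality to that difference, and finishes with Proposition~\ref{Prop-Concen-Cov} and a union bound ``with adequate rescaling.'' You instead center at $\ee_{\mup}(\Cpert(k))$, apply the corollary directly to the kernel $\frac{1}{n}\sum_{i}f(z_{i})f(z_{i+k})$ with the smaller constant $2a_{f}/n$, and absorb the bias $\lvert\ee_{\mup}(\Cpert(k))-\mathrm{Cov}(k)\rvert\leq 2a_{f}\varepsilon$ deterministically, the factors $64$ and $2^{q}$ then arising from evaluating at level $t/2$. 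Your route buys two things. First, your $K$ is a bona fide function of $y_{0},\ldots,y_{n+k-1}$ alone, so \eqref{exp-deviation-pert} applies verbatim, whereas the paper's difference observable depends jointly on the clean orbit and the fixed noise realization and strictly speaking needs a mild extension of Theorem~\ref{PerturbedIneq} to observables of the pairs $(x_{i},\xi_{i})$. Second, the paper's union bound at levels $t$ and $t$ literally controls the event $\{\lvert\Cpert(k)-\mathrm{Cov}(k)\rvert>2t+2a_{f}\varepsilon\}$, so its rescaling should cost an extra factor $4$ in the exponents; your bookkeeping avoids that. One observation: in your argument the event $\{\lvert\Cemp(k)-\mathrm{Cov}(k)\rvert>t\}$ is redundant, since the complement of the first event alone already forces $\lvert\Cpert(k)-\mathrm{Cov}(k)\rvert\leq t/2+2a_{f}\varepsilon$; you therefore actually prove the stronger one-term bound, and adjoining the second term is harmless and only serves to match the statement.
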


\begin{proof}
To prove this assertion we will use an estimate of
\begin{equation*}
\mup\left( \left\lvert \Cpert(k) -\Cemp(k) \right\rvert > t + \ee_{\mup}\left(\left\lvert\Cpert(k) -\Cemp(k)\right\rvert\right) \right).
\end{equation*}
First let us write $x_{i}:=T^{i}x$, and observe that by adding and subtracting 
$f(x_{i}+\varepsilon\xi_{i})f(x_{i+k})$, the quantity $\lvert\Cpert(k) -\Cemp(k)\rvert$ is less than or equal to
\begin{equation*}
\frac{1}{n}\sum_{i=0}^{n-1}\left\lvert f(x_{i}+\varepsilon\xi_{i})[f(x_{i+k}+\varepsilon\xi_{i+k})-f(x_{i+k})]
+[f(x_{i}+\varepsilon\xi_{i})-f(x_{i})]f(x_{i+k})\right\rvert ,
\end{equation*}
which leads us to the following estimate,
\begin{equation}\label{expect-CpertCemp}
\ee_{\mup}\left(\left\lvert\Cpert(k) - \Cemp(k)\right\rvert\right) \leq 2\varepsilon\Lip(f)\lVert f\rVert_{\infty}.
\end{equation}
For a given realization of the noise $\{ e_{i}\}$, consider the following observable of $n+k$ variables
\[
K(z_{0},\ldots,z_{n+k-1}) := \frac{1}{n}\sum_{i=0}^{n-1}\left( f(z_{i}+\varepsilon e_{i})f(z_{i+k}+\varepsilon e_{i+k}) - f(z_{i})f(z_{i+k})\right).
\]
For every $0\leq l\leq n-1$, one can easily obtain that
\begin{equation*}
\Lip_{l}(K)\leq \frac{4}{n}\Lip(f)\lVert f\rVert_{\infty}.
\end{equation*}

In the exponential case, from the inequality \eqref{exp-deviation-pert} and the bound \eqref{expect-CpertCemp} on the expected value of  $K$, we obtain that

\begin{equation*}
\mup\left(\left\lvert \Cpert(k)-\Cemp(k)\right\rvert> t +2\varepsilon a_{f}\right)\leq
2\exp\left( \frac{-t^{2}}{64Da_{f}^{2}(1+\varepsilon^{2})}\left(\frac{n^{2}}{n+k}\right) \right).
\end{equation*}
Using proposition \ref{Prop-Concen-Cov}, a union bound and an adequate rescaling, we get the result. In order to prove the polynomial inequality, proceed similarly applying \eqref{poly-deviation-pert}.
\end{proof}

% Empirical Measure
\subsection{Empirical measure}

The empirical measure of a sample $x_{0},\ldots,x_{n-1}$ is given by
\begin{equation*}
\E := \frac{1}{n}\sum_{i=0}^{n-1}\delta_{x_{i}},
\end{equation*}
where $\delta_{x}$ denotes the Dirac measure at $x$. If the given sample $x_{0},\ldots,x_{n-1}$ is the sequence $x,\ldots,T^{n-1}x$ for a $\mu$-typical $x\in X$, then from Birkhoff's ergodic theorem it follows that the sequence of random measures $\{\E\}$ converges weakly to the $T$-invariant measure $\mu$, almost surely. 

Consider the observed itinerary $y_{0}, \ldots,y_{n-1}$ and define the observed empirical measure by
\begin{equation*}
\Epert := \frac{1}{n}\sum_{i=0}^{n-1}\delta_{y_{i}}.
\end{equation*}
Observe that this measure is well defined on $X$. Again Birkhoff's ergodic theorem implies that almost surely
\[
\lim_{n\to\infty}\frac{1}{n}\sum_{i=0}^{n-1}g(y_{i}) = \int\int g(x+\xi)\dd\mu(x)\dd{P}(\xi),
\]
for every continuous function $g$. More precisely, this convergence holds for a set of $\mu$-measure one of initial conditions for the dynamical system $(X,T)$ and a set of measure one of noise realizations $(\xi_{i})$ with respect to the product measure $P^{\nn}$.
%In particular, the sequence $\{ \Epert\}$ converges weakly to the convolution of $\mu$ with $P$.

We want to estimate the speed of convergence of the observed empirical measure. For that purpose, we chose the Kantorovich distance on the set of probability measures, which is defined by
\begin{equation*}
\kappa(\mu,\nu) := \sup_{g\in\mathcal{L}} \int g \dd\mu - \int g \dd\nu,
\end{equation*}
where $\mu$ and $\nu$ are two probability measures on $X$ and $\mathcal{L}$ denotes the space of all real-valued Lipschitz functions on $X$ with Lipschitz constant at most one.

Now, we study the fluctuations of the Kantorovich distance of the observed empirical measure to the measure $\mu$, around its expected value. The statement is the following.

\begin{prop}\label{Fluctua-Emp-Meas}
If the system $(X,T,\mu)$ satisfies the exponential concentration inequality \eqref{exp-ineq}, then for all $t>0$ and any integer $n\geq1$,
\[
\mup\left( \kappa(\Epert,\mu)>t +\ee_{\mup}\big(\kappa(\Epert,\mu)\big)\right)\leq e^{-\frac{t^{2}n}{4D(1+\varepsilon^{2})}}.
\]
If the system satisfies the polynomial concentration inequality \eqref{poly-ineq} with moment $q\geq2$, then for all $t>0$ and any integer $n\geq1$,
\[
\mup\left( \kappa(\Epert,\mu) >t + \ee_{\mup}\big(\kappa(\Epert,\mu)\big) \right)\leq
 \frac{D_{q}(1+\varepsilon)^{q}}{{t}^{q}}\frac{1}{n^{q/2}}.
\]
\end{prop}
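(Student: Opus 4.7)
The plan is to view the Kantorovich distance as a separately Lipschitz observable of $y_{0},\ldots,y_{n-1}$ and then apply Theorem \ref{PerturbedIneq}. Define
\[
K(y_{0},\ldots,y_{n-1}) := \kappa(\Epert,\mu) = \sup_{g\in\mathcal{L}}\left( \frac{1}{n}\sum_{i=0}^{n-1} g(y_{i}) - \int g\,\dd\mu\right).
\]
Note that the target measure $\mu$ is fixed, so $K$ genuinely depends only on the observed sample.

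The crucial step is the Lipschitz estimate. Fix an index $j$ and replace $y_{j}$ by $y'_{j}$. For any fixed $g\in\mathcal{L}$, the expression inside the supremum changes by exactly $\tfrac{1}{n}\bigl(g(y_{j})-g(y'_{j})\bigr)$. Using the elementary inequality $|\sup_{g} A(g) - \sup_{g} B(g)| \leq \sup_{g}|A(g)-B(g)|$ together with the $1$-Lipschitz property of every $g\in\mathcal{L}$, I get
\[
\Lip_{j}(K) \leq \frac{1}{n}\qquad (0\leq j\leq n-1),
\]
so that $\sum_{j=0}^{n-1}\Lip_{j}(K)^{2} \leq \tfrac{1}{n}$. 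This is the only real computation involved.

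With this estimate in hand, the rest is mechanical. For the exponential case, I would set $Z := K-\ee_{\mup}(K)$ and apply \eqref{exp-ineq-pert} to $\lambda K$ (which is separately Lipschitz with constants $\lambda\Lip_{j}(K)$) combined with the Markov bound $\mup(Z>t)\leq e^{-\lambda t}\ee_{\mup}(e^{\lambda Z})$, yielding
\[
\mup(Z>t)\leq \exp\!\left(-\lambda t+\lambda^{2}D(1+\varepsilon^{2})/n\right);
\]
optimizing over $\lambda>0$ gives the claimed tail $e^{-t^{2}n/(4D(1+\varepsilon^{2}))}$. For the polynomial case, \eqref{poly-ineq-pert} together with the Lipschitz bound yields $\ee_{\mup}(|Z|^{q})\leq D_{q}(1+\varepsilon)^{q} n^{-q/2}$, and a direct application of Markov's inequality to $|Z|^{q}$ finishes the proof. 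Note that the statement concerns the one-sided event $\kappa(\Epert,\mu) > t + \ee_{\mup}(\kappa(\Epert,\mu))$, which explains the absence of the factor $2$ in the exponential bound.

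I do not expect any serious obstacle: once the Lipschitz constants of $K$ are controlled by $1/n$, both bounds follow directly from Theorem \ref{PerturbedIneq} and standard deviation inequalities. The only point meriting care is the supremum Lipschitz argument and the verification that Theorem \ref{PerturbedIneq} may indeed be invoked with $K$ regarded as a function of the observed variables.
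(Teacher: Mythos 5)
Your proposal is correct and follows essentially the same route as the paper: define $K(z_{0},\ldots,z_{n-1}) = \sup_{g\in\mathcal{L}}\bigl[\frac{1}{n}\sum_{i}g(z_{i})-\int g\,\dd\mu\bigr]$, check $\Lip_{j}(K)\leq 1/n$, and apply the deviation bounds derived from Theorem \ref{PerturbedIneq}. Your observation that the one-sided event accounts for the absence of the factor $2$ is also consistent with the stated bound.
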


Using the following separately Lipschitz function of $n$ variables,
\[
K(z_{0},\ldots,z_{n-1}) := \sup_{g\in\mathcal{L}}\left[\frac{1}{n}\sum_{i=0}^{n-1}g(z_{i})-\int g \dd\mu\right].
\]
It is easy to check that $\Lip_{j}(K)\leq \frac{1}{n}$, for every $j=0,\ldots,n-1$. The proposition follows from the concentration inequalities \eqref{exp-deviation-pert} and \eqref{poly-deviation-pert}. 

We are not able to obtain a sufficiently good estimate of $\ee_{\mup}\left(\kappa(\Epert,\mu)\right)$ in dimension larger than one, thus in the following we restrict ourselves to systems with $X\subset\rr$.

\begin{lemma}[\cite{ChCS}]\label{Bound-Expect}
Let $(X,T,\mu)$ be a dynamical system with $X\subset\rr$. If there exists a constant $c>0$ such that for every Lipschitz function $f$, the auto-covariance function $\mathrm{Cov}_{f}(k)$ satisfies that $\sum_{k=1}^{\infty}\lvert \mathrm{Cov}_{f}(k)\rvert\leq{c}\lVert{f}\rVert_{\Lip}^{2}$, then there exists a constant $B$ such that for all $n\geq1$
\begin{equation*}
\ee_{\mu_{n}}\left(\kappa(\E,\mu)\right) \leq \frac{B}{n^{1/4}}.
\end{equation*}
\end{lemma}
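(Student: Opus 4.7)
The plan is to exploit the one-dimensional setting through the Kantorovich--Rubinstein identity
\[
\kappa(\E,\mu) \;=\; \int_{\rr} |F_{\E}(t) - F_{\mu}(t)|\,\dd t,
\]
where $F_{\nu}(t):= \nu((-\infty,t])$; since $X$ is compact, the integrand is supported on a bounded interval $[a,b]\supset X$. Fubini then reduces the task to estimating $\ee_{\mu_{n}}|F_{\E}(t) - F_{\mu}(t)|$ pointwise in $t$ and integrating against Lebesgue measure over $[a,b]$.

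The main obstacle is that the indicator $\mathbf{1}_{(-\infty,t]}$ is not Lipschitz, so the auto-covariance hypothesis cannot be invoked on it directly. I would bypass this by introducing, at a scale $\delta>0$ to be tuned at the end, a Lipschitz cutoff $\chi_{t}^{\delta}$ equal to $1$ on $(-\infty,t]$, equal to $0$ on $[t+\delta,\infty)$, and affine in between, so that $\Lip(\chi_{t}^{\delta})\le 1/\delta$, $\|\chi_{t}^{\delta}\|_{\infty}\le 1$, and $0\le \chi_{t}^{\delta} - \mathbf{1}_{(-\infty,t]}\le \mathbf{1}_{[t,t+\delta]}$. Sandwiching yields
\[
|F_{\E}(t) - F_{\mu}(t)| \;\le\; \Bigl|\tfrac{1}{n}\sum_{i=0}^{n-1}\chi_{t}^{\delta}(T^{i}x) - \int\chi_{t}^{\delta}\,\dd\mu\Bigr| + \E([t,t+\delta]) + \mu([t,t+\delta]),
\]
splitting the problem into a fluctuation of a Lipschitz ergodic average plus two boundary remainders.

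For the fluctuation I would apply $\ee|Z|\le(\ee Z^{2})^{1/2}$ and expand the variance: by stationarity, together with $\Lip(\chi_{t}^{\delta})\le 1/\delta$ and the summability hypothesis applied to $\chi_t^\delta$,
\[
\mathrm{Var}_{\mu_{n}}\Bigl(\tfrac{1}{n}\sum_{i=0}^{n-1}\chi_{t}^{\delta}(T^{i}x)\Bigr) \;\le\; \frac{1}{n}\Bigl(\mathrm{Cov}_{\chi_{t}^{\delta}}(0) + 2\sum_{k=1}^{\infty}|\mathrm{Cov}_{\chi_{t}^{\delta}}(k)|\Bigr) \;\le\; \frac{c'}{n\delta^{2}},
\]
uniformly in $t$. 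For the remainders, $T$-invariance gives $\ee_{\mu_{n}}(\E([t,t+\delta])) = \mu([t,t+\delta])$, and a Fubini swap yields $\int_{a}^{b}\mu([t,t+\delta])\,\dd t = \delta$. Combining the three contributions,
\[
\ee_{\mu_{n}}\bigl(\kappa(\E,\mu)\bigr) \;\le\; \frac{(b-a)\sqrt{c'}}{\sqrt{n}\,\delta} + 2\delta,
\]
and the choice $\delta = n^{-1/4}$ balances the two terms and delivers the announced $B\,n^{-1/4}$ bound. The delicate step is threading the $1/\delta^{2}$ dependence cleanly through the summability hypothesis to get a variance estimate uniform in $t$; once that is secured, the rest is bookkeeping and a one-parameter optimization.
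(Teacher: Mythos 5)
Your proposal is correct and follows essentially the same route the paper attributes to \cite{ChCS}: rewrite the Kantorovich distance in dimension one via distribution functions, replace the indicator by a Lipschitz ramp at scale $\delta$, control the fluctuation term through the variance bound furnished by the summability of the auto-covariance, and optimize $\delta = n^{-1/4}$. The details (the $c'/(n\delta^{2})$ variance bound uniform in $t$, the $T$-invariance identity for the boundary terms, and the Fubini computation giving $\delta$) all check out.
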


The proof of the preceding lemma is found in \cite[Section 5]{ChCS}. It relies in the fact that in dimension one, it is possible to rewrite the Kantorovich distance using distribution functions. Then by an adequate Lipschitz approximation of the distribution function, the estimate bound follows from the summability condition on the auto-covariance function.

As a consequence of proposition \ref{Fluctua-Emp-Meas} and the previous lemma, we obtain the following result.

\begin{thm}
Assume that the system $(X,T,\mu)$ satisfies the assumptions of lemma \ref{Bound-Expect}. Let $\Epert$ be the observed empirical measure. If the system satisfies the exponential inequality \eqref{exp-ineq} then for all $t>0$ and for all $n\geq1$ we have that
\begin{equation*}
\mup\left( \kappa(\Epert,\mu)> \frac{t +B}{n^{1/4}}+\varepsilon \right)\leq e^{-\frac{t^{2}\sqrt{n}}{4D(1+\varepsilon^{2})}}.
\end{equation*}
If the system satisfies the polynomial inequality \eqref{poly-ineq} with moment $q\geq2$, then for all $t>0$ and for all $n\geq1$ we obtain
\begin{equation*}
\mup\left( \kappa(\Epert,\mu)> \frac{t+B}{n^{1/4}}+ \varepsilon \right)\leq \frac{D_{q}(1+\varepsilon)^{q}}{t^{q}}\frac{1}{n^{q/4}}.
\end{equation*}
\end{thm}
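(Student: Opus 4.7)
The plan is to reduce the theorem to the previous Proposition~\ref{Fluctua-Emp-Meas} by producing an explicit upper bound on $\ee_{\mup}(\kappa(\Epert,\mu))$ of the form $\varepsilon + B/n^{1/4}$, and then to rescale the deviation parameter $t$ to absorb the $n^{-1/4}$ rate into the tail.

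First I would split the Kantorovich distance by the triangle inequality,
\[
\kappa(\Epert,\mu)\leq \kappa(\Epert,\E)+\kappa(\E,\mu),
\]
where $\E$ denotes the unperturbed empirical measure along the orbit $x,Tx,\ldots,T^{n-1}x$. To handle the first term, observe that for any $g\in\mathcal{L}$ one has $|g(y_{i})-g(x_{i})|\leq|y_{i}-x_{i}|=\varepsilon\|\xi_{i}\|\leq\varepsilon$, so that pointwise (in both the dynamics and the noise) $\kappa(\Epert,\E)\leq \varepsilon$. For the second term I would invoke Lemma~\ref{Bound-Expect}, which directly gives $\ee_{\mu_{n}}(\kappa(\E,\mu))\leq B/n^{1/4}$. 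Combining these two facts yields
\[
\ee_{\mup}\bigl(\kappa(\Epert,\mu)\bigr)\leq \varepsilon + \frac{B}{n^{1/4}}.
\]

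Next I would apply Proposition~\ref{Fluctua-Emp-Meas} with the parameter $t/n^{1/4}$ in place of $t$: in the exponential case this gives
\[
\mup\!\left(\kappa(\Epert,\mu)>\tfrac{t}{n^{1/4}}+\ee_{\mup}(\kappa(\Epert,\mu))\right)\leq \exp\!\left(-\frac{(t/n^{1/4})^{2}\,n}{4D(1+\varepsilon^{2})}\right)=\exp\!\left(-\frac{t^{2}\sqrt{n}}{4D(1+\varepsilon^{2})}\right).
\]
Since the event $\{\kappa(\Epert,\mu)>(t+B)/n^{1/4}+\varepsilon\}$ is contained in the event $\{\kappa(\Epert,\mu)>t/n^{1/4}+\ee_{\mup}(\kappa(\Epert,\mu))\}$ thanks to the bound on the expectation, the claimed exponential estimate follows. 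The polynomial case is identical: substitute $t/n^{1/4}$ into the second part of Proposition~\ref{Fluctua-Emp-Meas} and use the same inclusion of events.

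There is no real obstacle beyond being careful with the rescaling; the only nontrivial ingredient, namely the $n^{-1/4}$ bound on $\ee_{\mu_{n}}(\kappa(\E,\mu))$, is imported from Lemma~\ref{Bound-Expect} (and is the reason we restrict to $X\subset\rr$). The noise contribution is harmless because the $\varepsilon$-perturbation can only move each atom of $\E$ by at most $\varepsilon$ and $1$-Lipschitz test functions do not amplify this displacement.
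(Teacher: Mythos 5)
Your proposal is correct and follows essentially the same route as the paper: bound $\ee_{\mup}(\kappa(\Epert,\mu))$ by $\varepsilon+B/n^{1/4}$ via the triangle inequality, the $1$-Lipschitz control of the noise displacement, and Lemma~\ref{Bound-Expect}, then apply Proposition~\ref{Fluctua-Emp-Meas} with $t$ replaced by $t/n^{1/4}$. The only difference is cosmetic — you make explicit the rescaling that the paper leaves as ``rescaling adequately'' — and your event inclusion argument is sound.
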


\begin{proof}
Clearly $ \ee_{\mup}(\kappa(\Epert,\mu))\leq \ee_{\mup}(\kappa(\Epert,\E)) + \ee_{\mup}(\kappa(\E,\mu))$.
A straightforward estimation yields
\begin{align*}
\ee_{\mup}(\kappa(\Epert,\E))
\leq&\int \sup_{g\in\mathcal{L}}\left[ \frac{1}{n}\sum_{i=0}^{n-1} \Lip(g)\varepsilon\lVert \xi_{i}\rVert \right] \dd\mup \
\leq \ \varepsilon.
\end{align*}
We obviously have $\ee_{\mup}\left(\kappa(\E,\mu)\right)=\ee_{\mu_{n}}\left(\kappa(\E,\mu)\right)$. Using the exponential estimate of proposition \ref{Fluctua-Emp-Meas} and lemma \ref{Bound-Expect} we obtain, for any $t>0$,
\[
\mup\left( \kappa(\Epert,\mu)\geq t+\varepsilon +\frac{B}{n^{1/4}} \right)\leq \exp\left(\frac{-t^{2}n}{4D(1+\varepsilon^{2})} \right).
\]
Rescaling adequately we get the result. For the polynomial case, one uses the polynomial estimate of proposition \ref{Fluctua-Emp-Meas}.
\end{proof}

% Kernel density estimator
\subsection{Kernel density estimator for one-dimensional maps}
In this section we consider the system $(X,T,\mu)$ where $X$ is a bounded subset of $\rr$. We assume the measure $\mu$ to be absolutely continuous with density $h$. For a given trajectory of a randomly chosen initial condition $x$ (according to $\mu$), the empirical density estimator is defined by,
\begin{equation*}
\h(x; s) :=\frac{1}{n\alpha_{n}}\sum_{j=0}^{n-1}\psi\left(\frac{s-T^{j}x}{\alpha_{n}}\right),
\end{equation*}
where $\alpha_{n}\to0$ and $n\alpha_{n}\to\infty$ as $n$ diverges. The kernel $\psi$ is a bounded and non-negative Lipschitz function with bounded support and it satisfies $\int\psi(s)\dd{s}=1$. We shall use the following hypothesis.

\begin{hypo}\label{hypo-density}
The probability density $h$ satisfies
\begin{equation*}
\int\lvert h(s) - h(s-\sigma)\rvert \dd s\leq C'\lvert \sigma \rvert^{\beta}
\end{equation*}
for some constants $C'>0$ and $\beta>0$ and for every $\sigma\in\rr$.
\end{hypo}

This assumption is indeed valid for maps on the interval satisfying the axioms of Young towers with exponential tails (see \cite[Appendix C]{ChCS}). For convenience, we present the following result on the $L^{1}$ convergence of the density estimator (\cite{ChGo}).

\begin{prop}
Let $\psi$ be a kernel defined as above. If the system $(X,T,\mu)$ satisfies the exponential concentration inequality \eqref{exp-ineq} and the hypothesis \ref{hypo-density}, then there exist a constant $C_{\psi}>0$ such that for any integer $n\geq1$ and every $t>C_{\psi}\left(\alpha_{n}^{\beta}+\frac{1}{\sqrt{n}\alpha_{n}^{2}}\right)$, we have
\begin{equation*}
\mu\left( \int\left\lvert \h(x;s)-h(s)\right\rvert \dd{s} >t\right)\leq e^{-\frac{n\alpha_{n}^{4}t^{2}}{4C\Lip(\psi)^{2}}}.
\end{equation*}
Under the same conditions above, if the system satisfies the polynomial concentration inequality \eqref{poly-ineq} for some $q\geq2$, then  for any integer $n\geq1$ and every $t>C_{\psi}\left(\alpha_{n}^{\beta}+\frac{1}{\sqrt{n}\alpha_{n}^{2}}\right)$, we obtain,
\begin{equation*}
\mu\left( \int\left\lvert \h(x;s)-h(s)\right\rvert \dd{s} >t\right)\leq \frac{C_{q}}{t^{q}}\left(\frac{\Lip(\psi)}{\sqrt{n}\alpha_{n}^{2}} \right)^{q}.
\end{equation*}
The parameter $\beta$ is the same constant appearing in the hypothesis \ref{hypo-density}.
\end{prop}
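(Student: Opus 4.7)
The plan is to apply the deviation estimates \eqref{exp-prob-deviation} and \eqref{poly-prob-deviation} to the separately Lipschitz observable
$$K(z_0,\ldots,z_{n-1}) := \int\left|\frac{1}{n\alpha_n}\sum_{j=0}^{n-1}\psi\!\left(\frac{s-z_j}{\alpha_n}\right) - h(s)\right|\dd s,$$
so that $K(x,Tx,\ldots,T^{n-1}x) = \int |\h(x;s)-h(s)|\dd s$. The first ingredient is a Lipschitz estimate: changing $z_j$ to $z_j'$ affects a single summand whose pointwise size is controlled by $\Lip(\psi)|z_j-z_j'|/\alpha_n$ and which lives on a set of Lebesgue measure bounded independently of $n$, so that $\Lip_j(K) \leq C_1 \Lip(\psi)/(n\alpha_n^2)$ and consequently
$$\sum_{j=0}^{n-1}\Lip_j(K)^2 \leq \frac{C_1^2\Lip(\psi)^2}{n\alpha_n^4},$$
which already exhibits the correct scaling for both target conclusions.

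The second ingredient is a bound on $\ee_\mu K$. Splitting by the triangle inequality in $L^1(\dd s)$,
$$\ee_\mu K \leq \int\bigl|\ee_\mu\h(\cdot;s) - h(s)\bigr|\dd s + \int\ee_\mu\bigl|\h(\cdot;s) - \ee_\mu\h(\cdot;s)\bigr|\dd s.$$
The bias piece equals $\int\bigl|\int\psi(u)\bigl(h(s-\alpha_n u) - h(s)\bigr)\dd u\bigr|\dd s$ (by $T$-invariance and a change of variables), and Fubini combined with Hypothesis \ref{hypo-density} bounds it by $C'\alpha_n^\beta\int|u|^\beta\psi(u)\dd u$. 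The fluctuation piece is controlled by a variance bound: applying \eqref{poly-ineq} with $q=2$ (or \eqref{exp-ineq} via a Taylor expansion) to $\h(\cdot;s)$, whose separately Lipschitz constants in the $z_j$'s are again of order $\Lip(\psi)/(n\alpha_n^2)$, yields a pointwise standard deviation of order $1/(\sqrt{n}\alpha_n^2)$, and integrating over $s$ on the bounded support of $\h-h$ only contributes a constant. Thus $\ee_\mu K \leq C_\psi'\bigl(\alpha_n^\beta + 1/(\sqrt n\alpha_n^2)\bigr)$.

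Choosing $C_\psi := 2C_\psi'$, the assumption $t > C_\psi(\alpha_n^\beta + 1/(\sqrt n\alpha_n^2))$ forces $t - \ee_\mu K \geq t/2$. In the exponential regime, the one-sided form of \eqref{exp-prob-deviation} then yields
$$\mu(K>t) \leq \mu\bigl(K - \ee_\mu K > t/2\bigr) \leq \exp\!\left(-\frac{n\alpha_n^4 t^2}{4C\Lip(\psi)^2}\right)$$
after absorbing the numerical and $C_1^2$ factors into $C$. The polynomial case is analogous: \eqref{poly-prob-deviation} applied with deviation $t/2$, combined with $\bigl(\sum_j\Lip_j(K)^2\bigr)^{q/2} \leq \bigl(C_1\Lip(\psi)/(\sqrt n\alpha_n^2)\bigr)^q$, produces the stated bound after relabeling $C_q$. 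The step I expect to be most delicate is the fluctuation estimate for $\ee_\mu K$: one must observe that reducing the $L^1(\dd s)$ expectation to a pointwise variance bound costs only the fixed Lebesgue measure of the support of $\h-h$ and not an extra $\alpha_n$-dependent factor; once this observation is isolated, everything else plugs in mechanically.
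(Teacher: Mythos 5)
Your proof is correct and follows essentially the same route as the argument the paper defers to (Theorem 6.1 of Chazottes--Collet--Schmitt, mirrored in this paper's proof of the observed-system version of this proposition): the same observable $K$, the same Lipschitz bound $\Lip_j(K)\lesssim \Lip(\psi)/(n\alpha_n^{2})$, and the same bias-plus-fluctuation decomposition of $\ee_{\mu}(K)$ using Hypothesis \ref{hypo-density} for the bias and a pointwise variance bound integrated over a bounded support for the fluctuation. The only caveat is cosmetic: passing from deviation $t$ to $t/2$ after absorbing $\ee_{\mu}(K)$ changes the absolute constant in the exponent (one gets $16C$ rather than the stated $4C$), a looseness already present in the statement as written.
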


For the proof of this statement see \cite{ChGo} or Theorem 6.1 in \cite{ChCS}.

%Observed Kernel density estimator
\subsubsection{Kernel density estimator for observed maps on the circle}
In order to avoid `leaking' problems, now we assume $X=\mathbb{S}^{1}$.
Given the observed sequence $\{y_{j}\}$, let us define the observed empirical density estimator by
\begin{equation*}
\hpert(y_{0},\ldots,y_{n-1};s) : = \frac{1}{n\alpha_{n}}\sum_{j=0}^{n-1}\psi\left(\frac{s-y_{j}}{\alpha_{n}}\right).
\end{equation*}

Our result is the following.

\begin{thm}
If $(X,T,\mu)$ satisfies the hypothesis \ref{hypo-density} and the exponential concentration inequality, then there exists a constant $C_{\psi}>0$ such that, for all $t>C_{\psi}\left(\alpha_{n}^{\beta} + \frac{1}{\sqrt{n}\alpha_{n}^{2}}\right)$ and for any integer $n\geq1$,
\begin{equation*}
\mup\left( \int\left\lvert \hpert(y_{0},\ldots,y_{n-1};s) - h(s)\right\rvert \dd{s}> t + \Lip(\psi)\frac{\varepsilon}{\alpha_{n}^{2}}\right) \leq \exp\left(-\frac{n\alpha_{n}^{4}t^{2}}{R(1+\varepsilon^{2})}\right),
\end{equation*}
where $R:=4D\Lip(\psi)^{2}$.

\noindent If the system satisfies the hypothesis \ref{hypo-density} and the polynomial concentration inequality, then for all $t>C_{\psi}\left(\alpha_{n}^{\beta} + \frac{1}{\sqrt{n}\alpha_{n}^{2}}\right)$ and for any integer $n\geq1$, we have
\begin{equation*}
\mup\left( \int\left\lvert \hpert(y_{0},\ldots,y_{n-1};s) - h(s)\right\rvert \dd{s}> t + \Lip(\psi)\frac{\varepsilon}{\alpha_{n}^{2}}\right) \leq  D_{q}\left( \frac{(1+\varepsilon)\Lip(\psi)}{t\sqrt{n}\alpha_{n}^{2}}\right)^{q}.
\end{equation*}
The parameter $\beta$ is the same constant appearing as in the hypothesis \ref{hypo-density}.
\end{thm}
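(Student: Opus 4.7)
The plan is to apply the perturbed deviation inequalities \eqref{exp-deviation-pert} and \eqref{poly-deviation-pert} to the separately Lipschitz observable
\[
K(y_{0},\ldots,y_{n-1}):=\int_{\mathbb{S}^{1}}\bigl\lvert \hpert(y_{0},\ldots,y_{n-1};s)-h(s)\bigr\rvert\,\dd s,
\]
and to control $\ee_{\mup}(K)$ by comparing $\hpert$ with the noiseless estimator $\h$. The two ingredients are therefore (i) a bound on $\sum_{j}\Lip_{j}(K)^{2}$ that produces the exponential rate $-n\alpha_{n}^{4}t^{2}$, and (ii) a bound on $\ee_{\mup}(K)$ that cleanly separates the noise contribution $\Lip(\psi)\varepsilon/\alpha_{n}^{2}$ (which will remain as an explicit additive shift inside the probability event) from the unperturbed $L^{1}$ bias-plus-fluctuation scale $C_{\psi}\bigl(\alpha_{n}^{\beta}+1/(\sqrt{n}\alpha_{n}^{2})\bigr)$ (which will be absorbed into the hypothesis on $t$).

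For the Lipschitz constant, replacing $y_{j}$ by $y'_{j}$ modifies a single summand of $\hpert$, so the Lipschitz hypothesis on $\psi$ together with $\Leb(\mathbb{S}^{1})=1$ yields
\[
\lvert K(y)-K(y')\rvert \leq \frac{1}{n\alpha_{n}}\int_{\mathbb{S}^{1}} \Bigl\lvert \psi\Bigl(\tfrac{s-y_{j}}{\alpha_{n}}\Bigr)-\psi\Bigl(\tfrac{s-y'_{j}}{\alpha_{n}}\Bigr)\Bigr\rvert\,\dd s \leq \frac{\Lip(\psi)}{n\alpha_{n}^{2}}\lvert y_{j}-y'_{j}\rvert,
\]
hence $\sum_{j=0}^{n-1}\Lip_{j}(K)^{2}\leq \Lip(\psi)^{2}/(n\alpha_{n}^{4})$. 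For the expectation, set $x_{j}:=T^{j}x$ and use the triangle inequality $K\leq\int\lvert\hpert-\h\rvert\,\dd s+\int\lvert\h-h\rvert\,\dd s$. The first piece is controlled term by term by the same Lipschitz estimate on $\psi$:
\[
\int_{\mathbb{S}^{1}}\bigl\lvert \hpert(y;s)-\h(x;s)\bigr\rvert\,\dd s \leq \frac{1}{n\alpha_{n}}\sum_{j=0}^{n-1}\frac{\Lip(\psi)\,\varepsilon\lVert\xi_{j}\rVert}{\alpha_{n}} \leq \frac{\Lip(\psi)\,\varepsilon}{\alpha_{n}^{2}},
\]
using $\lVert\xi_{j}\rVert\leq 1$. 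The second piece is bounded by the unperturbed density-estimator proposition: a standard bias/variance decomposition (bias handled by Hypothesis~1 via convolution with the rescaled kernel, variance handled by applying \eqref{exp-ineq} or \eqref{poly-ineq} pointwise in $s$ with $\Lip_{j}\leq\Lip(\psi)/(n\alpha_{n}^{2})$ and integrating) yields $\ee_{\mu_{n}}\int\lvert\h-h\rvert\,\dd s \leq C_{\psi}\bigl(\alpha_{n}^{\beta}+1/(\sqrt{n}\alpha_{n}^{2})\bigr)$.

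To conclude, feed these estimates into \eqref{exp-deviation-pert}: setting $s:=t-C_{\psi}\bigl(\alpha_{n}^{\beta}+1/(\sqrt{n}\alpha_{n}^{2})\bigr)$, which is positive by the hypothesis on $t$, the event $\{K>t+\Lip(\psi)\varepsilon/\alpha_{n}^{2}\}$ is contained in $\{K>\ee_{\mup}(K)+s\}$, whose probability is at most $\exp\bigl(-n\alpha_{n}^{4}s^{2}/(R(1+\varepsilon^{2}))\bigr)$ with $R=4D\Lip(\psi)^{2}$; enlarging $C_{\psi}$ slightly so that $s\geq t/\sqrt{c}$ for a suitable absolute $c$ reproduces the exponential rate in $t^{2}$ displayed in the theorem. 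The polynomial case is identical after substituting \eqref{poly-deviation-pert} and tracking $(\sum\Lip_{j}(K)^{2})^{q/2}=\bigl(\Lip(\psi)^{2}/(n\alpha_{n}^{4})\bigr)^{q/2}$. The only truly delicate point is this final bookkeeping: the additive shift $\Lip(\psi)\varepsilon/\alpha_{n}^{2}$ must be kept intact while the unperturbed scale is absorbed into the threshold, and the constant in front of $t^{2}$ must remain equal to $R(1+\varepsilon^{2})$; everything else is immediate from Theorem~\ref{PerturbedIneq} and the unperturbed result already quoted.
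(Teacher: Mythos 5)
Your proposal follows the paper's proof essentially verbatim: the same observable $K(z_{0},\ldots,z_{n-1})=\int\lvert\frac{1}{n\alpha_{n}}\sum_{j}\psi((s-z_{j})/\alpha_{n})-h(s)\rvert\,\dd s$ with $\Lip_{l}(K)\leq\Lip(\psi)/(n\alpha_{n}^{2})$, the same splitting of $\ee_{\mup}(K)$ into the noise term $\Lip(\psi)\varepsilon/\alpha_{n}^{2}$ plus the unperturbed $L^{1}$ bound $C_{\psi}(\alpha_{n}^{\beta}+1/(\sqrt{n}\alpha_{n}^{2}))$ quoted from the noiseless case, and the same final application of \eqref{exp-deviation-pert} and \eqref{poly-deviation-pert}. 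You are in fact more candid than the paper about the last bookkeeping step (absorbing $t-C_{\psi}(\cdots)$ back into $t$ requires enlarging $C_{\psi}$, at the cost of an absolute factor in the exponent that the paper's stated constant $R$ silently ignores), so the argument is correct and identical in approach.
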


\begin{proof}
Consider the following observable of $n$ variables,
\[
K(z_{0},\ldots,z_{n-1}) := \int\Big\lvert \frac{1}{n\alpha_{n}}\sum_{j=0}^{n-1}\psi\left( \frac{s-z_{j}}{\alpha_{n}}\right) - h(s)\Big\rvert \dd{s}.
\]
It is straightforward to obtain that $\Lip_{l}(K) \leq\frac{\Lip(\psi)}{n\alpha_{n}^{2}}$,
for every $l=0,\ldots, n-1$. Next, we need to give an upper bound for the expected value of the observable $K$, first
\begin{align*}
\ee_{\mup}(K) \leq& 
\int \Big(\int \Big\lvert \frac{1}{n\alpha_{n}} \sum_{j=0}^{n-1}\left[\psi\Big(\frac{s-y_{j}}{\alpha_{n}}\Big)- \psi\left(\frac{s-x_{j}}{\alpha_{n}}\right)\right]\Big\rvert \dd{s}\Big)\dd\mup \\
& \hspace{1.3cm}+ \int\Big(\int \Big\lvert \frac{1}{n\alpha_{n}}\sum_{j=0}^{n-1}\psi\Big(\frac{s-x_{j}}{\alpha_{n}}\Big)-h(s)\Big\rvert \dd{s}\Big)\dd\mu_{n}.
\end{align*}
Subsequently we proceed on each part. For the first one we get
\begin{equation*}
\begin{split}
\int \Big(\int \Big\lvert \frac{1}{n\alpha_{n}} \sum_{j=0}^{n-1}\left[\psi\Big(\frac{s-y_{j}}{\alpha_{n}}\Big)-\psi\Big(\frac{s-x_{j}}{\alpha_{n}}\Big)\right]\Big\rvert \dd{s}\Big)d\mup\\
 \leq  \int \Big(\frac{1}{n\alpha_{n}}\sum_{j=0}^{n-1}\frac{\Lip(\psi)\varepsilon}{\alpha_{n}}\Big) \dd\mup &\ \leq \ \Lip(\psi)\frac{\varepsilon}{\alpha^{2}_{n}}.
\end{split}
\end{equation*}
For the second part, there exist some constant $C_{\psi}$ such that
\[
\int\Big(\int \Big\lvert \frac{1}{n\alpha_{n}}\sum_{j=0}^{n-1}\psi\Big(\frac{s-x_{j}}{\alpha_{n}}\Big)-h(s)\Big\rvert \dd{s}\Big)\dd\mu_{n} \leq C_{\psi}\left(\alpha_{n}^{\beta} + \frac{1}{\sqrt{n}\alpha_{n}^{2}}\right).
\]
The proof of this statement is found in \cite[Section 6]{ChCS}. We finish the proof applying \eqref{exp-deviation-pert} and \eqref{poly-deviation-pert}, respectively.
\end{proof}

% Correlation Dimension
\subsection{Correlation dimension}
The correlation dimension $d_{c} = d_{c}(\mu)$ of the measure $\mu$ is defined by
\[
d_{c} = \lim_{r\searrow0}\frac{\log{\int\mu(B_{r}(x))\dd\mu(x)}}{\log{r}},
\]
provided the limit exists. We denote by $\C(r)$ the \textsl{spatial correlation integral} which is defined by
\[
\C(r) = \int\mu(B_{r}(x))\dd\mu(x).
\]
As empirical estimator of $\C(r)$ we choose the following function of $n$ variables
\[
K_{n,r}(x_{0},\ldots,x_{n-1}) := \frac{1}{n^{2}}\sum_{i\neq j}H(r-d(x_{i},x_{j})),
\]
where $H$ is the Heaviside function. It has been proved (see e.g. \cite{Ser}) that
\[
\C(r) = \lim_{n\to\infty}K_{n,r}(x,\ldots, T^{n-1}x),
\]
$\mu$-almost surely at the continuity points of $\C(r)$. Next, given a $\mu$-typical initial condition, let us consider the observed sequence $y_{0},\ldots,y_{n-1}$, and define the estimator of $\C(r)$ for observed systems, as follows
\begin{equation*}
\K(y_{0},\ldots,y_{n-1}) := \frac{1}{n^{2}}\sum_{i\neq j}H(r-d(y_{i},y_{j})).
\end{equation*}

Since $\K(y_{0},\ldots,y_{n-1})$ is not a Lipschitz function we cannot apply directly concentration inequalities. The usual trick is to replace $H$ by a Lipschitz continuous function $\phi$ and then define the new estimator 
\begin{equation}\label{estim-pert-Corr}
\K^{\phi}(y_{0},\ldots,y_{n-1}):=\frac{1}{n^{2}}\sum_{i\neq j}\phi\left(1-\frac{d(y_{i},y_{j})}{r}\right).
\end{equation}

The result of this section is the following estimate on the variance of the estimator $\K^{\phi}$.

\begin{thm}
Let $\phi$ be a Lipschitz continuous function. Consider the observed trajectory $y_{0},\ldots,y_{n-1}$ and the function $\K^{\phi}(y_{0},\ldots,y_{n-1})$ given by \eqref{estim-pert-Corr}. If the system $(X,T,\mu)$ satisfies the polynomial concentration inequality with $q=2$, then for any integer $n\geq1$,
\begin{equation*}
\mathrm{Var}(\K^{\phi})\leq D_{2}\Lip(\phi)^{2}(1+\varepsilon)^{2}\frac{1}{r^{2}n},
\end{equation*}
where $\mathrm{Var}(Y) := \ee(Y^{2}) - \ee(Y)^{2}$ is the variance of $Y$.
\end{thm}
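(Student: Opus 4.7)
The plan is to apply the polynomial concentration inequality \eqref{poly-ineq-pert} with $q=2$ to the observable $\K^{\phi}$, since the variance is exactly $\ee_{\mup}((\K^{\phi}-\ee_{\mup}(\K^{\phi}))^{2})$. The whole proof reduces to estimating the separate Lipschitz constants $\Lip_{l}(\K^{\phi})$.

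First I would fix $l\in\{0,\ldots,n-1\}$ and isolate the terms in the sum defining $\K^{\phi}$ that depend on $y_{l}$. In the sum over ordered pairs $i\neq j$, the variable $y_{l}$ appears exactly in the $2(n-1)$ terms with $i=l$ or $j=l$. Replacing $y_{l}$ by $y'_{l}$ in any of these terms and using that $\phi$ is $\Lip(\phi)$-Lipschitz together with the triangle inequality for $d$, each term is perturbed by at most
\[
\Lip(\phi)\,\frac{|d(y_{l},y_{k})-d(y'_{l},y_{k})|}{r}\leq \frac{\Lip(\phi)}{r}\,d(y_{l},y'_{l}).
\]
Summing over the $2(n-1)$ affected terms and dividing by $n^{2}$ gives
\[
\Lip_{l}(\K^{\phi})\leq \frac{2(n-1)}{n^{2}}\cdot\frac{\Lip(\phi)}{r}\leq \frac{2\Lip(\phi)}{n\,r}.
\]

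Next, summing these squared Lipschitz constants over $l=0,\ldots,n-1$ yields
\[
\sum_{l=0}^{n-1}\Lip_{l}(\K^{\phi})^{2}\leq n\cdot \frac{4\Lip(\phi)^{2}}{n^{2}r^{2}} = \frac{4\Lip(\phi)^{2}}{n\,r^{2}}.
\]
Applying \eqref{poly-ineq-pert} with $q=2$ to $\K^{\phi}$ (which is separately Lipschitz, hence a legitimate observable), we obtain
\[
\mathrm{Var}(\K^{\phi})=\ee_{\mup}\left(|\K^{\phi}-\ee_{\mup}(\K^{\phi})|^{2}\right)\leq D_{2}(1+\varepsilon)^{2}\cdot\frac{4\Lip(\phi)^{2}}{n\,r^{2}},
\]
which is the claimed bound once the harmless factor $4$ is absorbed into the constant $D_{2}$.

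There is no real obstacle here: the argument is a routine application of Theorem~\ref{PerturbedIneq} once the Lipschitz analysis of $\K^{\phi}$ is carried out. The only point requiring mild care is that the normalization $d(y_{i},y_{j})/r$ inside $\phi$ produces the factor $1/r$ in each single-coordinate variation (this is what makes the bound blow up as $r\to 0$, a feature one expects since $H$ itself is not Lipschitz). Everything else is combinatorial bookkeeping of how many pairs involve a given index $l$.
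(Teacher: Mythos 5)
Your proof is correct and follows essentially the same route as the paper: bound the separate Lipschitz constants of $\K^{\phi}$ and apply \eqref{poly-ineq-pert} with $q=2$. The only discrepancy is the constant — the paper asserts $\Lip_{l}(\K^{\phi})\leq\frac{\Lip(\phi)}{rn}$ (consistent with counting each unordered pair once), whereas your count of the $2(n-1)$ ordered pairs containing $l$ gives $\frac{2\Lip(\phi)}{rn}$ and hence an extra factor of $4$ in the variance bound, which is harmless.
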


The proof follows the lines of section 4 in \cite{ChCS}, and by applying the inequality \eqref{poly-ineq-pert} with $q=2$ and noticing that $\Lip_{l}(\K^{\phi}) \leq \frac{\Lip(\phi)}{rn}$ for every $l=0,\ldots,n-1$.

%bibliography
\bibliographystyle{plain}
\bibliography{bib}

\end{document}